\newtheorem{theorem}{Theorem}
\newtheorem{definition}{Definition}
\newtheorem{proposition}[theorem]{Proposition}
\newtheorem{lemma}[theorem]{Lemma}
\newcommand{\pa}{\partial}
\newcommand{\J}{\mathcal{J}}
\newcommand{\K}{\mathcal{K}}
\renewcommand{\S}{\mathcal{S}}
\newcommand{\W}{\mathcal{W}}
\renewcommand{\L}{\mathcal{L}}
\newcommand\Zop{\mathbb{Z^{\mathrm{odd}}_+}}
\newcommand\B{\mathbb{B}}
\newcommand\Z{\mathbb{Z}}
\newcommand\M{\mathcal{M}}
\def\[{\begin{eqnarray}}
\def\]{\end{eqnarray}}
\def\d{\partial}
\begin{document}
\title{Quantum torus algebras and B(C) type Toda systems}
\author{Na Wang\dag, Chuanzhong Li\ddag  }
\allowdisplaybreaks
\dedicatory {\small
\dag Department of Mathematics and Statistics, Henan University, Kaifeng, 475001, China\\
\ddag Department of Mathematics,  Ningbo University, Ningbo, 315211, China\\
}

\thanks{\ddag Corresponding author's email:lichuanzhong@nbu.edu.cn}
\begin{abstract}
In this paper, we construct a new even constrained B(C) type Toda hierarchy and derive its B(C) type Block type additional symmetry. Also we generalize the B(C) type Toda hierarchy to the $N$-component B(C) type Toda hierarchy which is proved to have symmetries of a coupled $\bigotimes^NQT_+ $ algebra
( $N$-folds direct product of the
positive half of the quantum torus algebra $QT$).

\end{abstract}
\maketitle
{\bf Mathematics Subject Classifications}(2000):  37K05, 37K10, 37K40.

 \textbf{Keywords}:  B(C) type Toda hierarchy, additional symmetries, multicomponent B(C) type Toda hierarchy, quantum torus algebra.
\tableofcontents
%%%%%%%%%%%%%%%%%%%%%%%%%%%%%%%%%%%%%%%%%%%%%%%%%%%%%%
\section{Introduction}
%%%%%%%%%%%%%%%%%%%%%%%%%%%%%%%%%%%%%%%%%%%%%%%%%%%%%%%
The Toda lattice hierarchy
  as  a completely integrable system  has many important applications in mathematics and physics including the representation theory of Lie algebras and  random
matrix models  \cite{Toda,UT,witten}. The Toda system has many kinds of reductions or extensions, for example the B and C type Toda hierarchies\cite{UT,cheng2011}, extended Toda hierarchy (ETH)\cite{CDZ}, bigraded Toda hierarchy (BTH)\cite{C}-\cite{RMP} and so on. There are some other generalizations called  multi-component Toda systems\cite{UT,manasInverse2} which are useful in the fields of multiple orthogonal
polynomials and non-intersecting Brownian motions.

The multicomponent 2D Toda hierarchy
 was considered from the point of view of the Gauss-Borel factorization problem, the theory  of multiple matrix orthogonal polynomials, non-intersecting Brownian motions and matrix Riemann-Hilbert problem \cite{manasInverse2}-\cite{manas}. In fact the multicomponent 2D Toda hierarchy in \cite{manasinverse} is a periodic reduction of the bi-infinite matrix-formed two dimensional Toda hierarchy.
In \cite{EMTH}, we generalize the multicomponent Toda hierarchy to an extended multicomponent Toda hierarchy including extended logarithmic flow equations. Later by a commutative algebraic reduction on the extended multicomponent Toda hierarchy, we get an extended $Z_N$-Toda hierarchy\cite{EZTH} which might be useful in Gromov-Witten theory.

This paper is organized in the following way. In Section 2, we
recall some basic knowledge about the B(C) type Toda hierarchy. We construct a new even constrained B(C) type Toda hierarchy and derive its Block type additional symmetry
 in Section 3. Next, in Section 4 we generalize the B(C) type Toda hierarchy to a new $N$-component B(C) type Toda hierarchy. In the last section, we construct the symmetry of the $N$-component B(C) type Toda hierarchy which constitutes a coupled $\bigotimes^NQT_+ $ algebra
( $N$-folds direct product of the
positive half of the quantum torus algebra $QT$).

%%%%%%%%%%%%%%%%%%%%%%%%%%%%%%%%%%%%%%%%%%%%%%%%%%%%%%
\section{The B(C) type Toda  hierarchy}
%%%%%%%%%%%%%%%%%%%%%%%%%%%%%%%%%%%%%%%%%%%%%%%%%%%%%%%
In this section, some basic facts about the B(C) type Toda  hierarchy
are reviewed. One can refer to \cite{UT,cheng2011} for
 more details about the B(C) type Toda  hierarchy (or BTH(CTH)).

Then the BTH  hierarchy is defined in the Lax forms as
\begin{equation}\label{bctlhierarchy}
    \pa_{x_{2n+1}}L_1=[-(L^{2n+1}_1)_-,L_1]\ \ \ \textmd{and}\ \ \ \pa_{y_{2n+1}}L_1=[-(L_2^{2n+1})_-,L_1],\ \ \ n=0,1,2,\cdots,
\end{equation}
\begin{equation}\label{bctlhierarchy}
    \pa_{x_{2n+1}}L_2=[(L^{2n+1}_1)_+,L_2]\ \ \ \textmd{and}\ \ \ \pa_{y_{2n+1}}L_2=[(L_2^{2n+1})_+,L_2],\ \ \ n=0,1,2,\cdots,
\end{equation}
where the Lax operator $L_i$ is given by  a pair of infinite matrices
\begin{equation}\label{laxoperator}
    L_1=\sum _{-\infty<i\leq 1}\textmd{diag}[a_i^{(1)}(s)]\Lambda^i,\ \
    L_2=\sum _{-1\leq i<\infty}\textmd{diag}[a_i^{(2)}(s)]\Lambda^i,
\end{equation}
with $\Lambda=(\delta_{j-i,1})_{i,j\in \mathbb{Z}}$, and
$a_i^{(k)}(s)$ and  $a_i^{(k)}(s)$ depending on
$x=(x_1,x_3,x_5,\cdots)$ and $y=(y_1,y_3,y_5,\cdots)$, such that
$$a_1^{(1)}(s)=1 \ \ \ \text{and}\ \ \ a_{-1}^{(2)}(s)\neq 0\ \ \ \forall s$$
and satisfies the BTH(CTH)  constraint\cite{UT}
\begin{equation}\label{bctlconstr}
    L_i^T=-JL_iJ^{-1},\ \ L_i^T=-KL_iK^{-1},
\end{equation}
where $J=((-1)^i\delta_{i+j,0})_{i,j\in\mathbb{Z}}, K=\Lambda J$
and $T$ refers to the matrix transpose. The BTH  constraint
 is explicitly
showed as
\begin{equation}\label{bctlconstrcomponent}
    a_i^{(k)}(s)=(-1)^{i+1}a_i^{(k)}(-s-i)\ , k=1,2.
\end{equation}
The CTH  constraint
 means
\[ a_i^{(k)}(s)=(-1)^{i+1}a_i^{(k)}(-s-i-1).\]

The Lax equation for the BTH(CTH) can be expressed as
a system of equations of the Zakharov-Shabat type:
\begin{eqnarray}
&&\pa_{x_{2n+1}}(L^{2m+1}_1)_+-\pa_{x_{2m+1}}(L^{2n+1}_1)_++[(L^{2m+1}_1)_+,(L^{2n+1}_1)_+]=0,\label{zs1}\\
&&\pa_{y_{2n+1}}(L^{2m+1}_2)_-+\pa_{y_{2m+1}}(L^{2n+1}_2)_--[(L^{2m+1}_2)_-,(L^{2n+1}_2)_-]=0,\label{zs2}\\
&&\pa_{y_{2n+1}}(L^{2m+1}_1)_++\pa_{x_{2m+1}}(L^{2n+1}_2)_--[(L^{2m+1}_1)_+,(L^{2n+1}_2)_-]=0,\label{zs2}\\
&&-\pa_{y_{2n+1}}(L^{2m+1}_2)_--\pa_{x_{2m+1}}(L^{2n+1}_1)_+-[(L^{2m+1}_2)_-,(L^{2n+1}_1)_+]=0.
\end{eqnarray}
When $m=n=0$, one can get the B type Toda  equation
\begin{eqnarray}
&&\pa_{x_1}a_{-1}^{(2)}(1)=a_{-1}^{(2)}(1)a_{0}^{(1)}(1),\ \
\pa_{x_1}a_{-1}^{(2)}(s)=a_{-1}^{(2)}(s)(a_{0}^{(1)}(s)-a_{0}^{(1)}(s-1))\
(s\geq
2),\nonumber\\
&&\pa_{y_1}a_{0}^{(1)}(s)=a_{-1}^{(2)}(s)-a_{-1}^{(2)}(s+1)\
(s\geq1),\label{BTHequation}
\end{eqnarray}
by considering the corresponding constraint
(\ref{bctlconstrcomponent}).
Also one can get the  C type Toda  equation
\begin{eqnarray}
&&\pa_{x_1}a_{-1}^{(2)}(0)=2a_{-1}^{(2)}(0)a_{0}^{(1)}(0),\ \
\pa_{x_1}a_{-1}^{(2)}(s)=a_{-1}^{(2)}(s)(a_{0}^{(1)}(s)-a_{0}^{(1)}(s-1))\
(s\geq
1),\nonumber\\
&&\pa_{y_1}a_{0}^{(1)}(s)=a_{-1}^{(2)}(s)-a_{-1}^{(2)}(s+1)\
(s\geq0),\label{ctlequation}
\end{eqnarray}
The Lax operator of the BTH(CTH) (\ref{bctlhierarchy})
has the representation
\[
L_1&=&W_1\Lambda W_1^{-1}=S_1\Lambda S_1^{-1},\\
L_2&=&W_2\Lambda^{-1}W_2^{-1}=S_2\Lambda^{-1}S_2^{-1},
\] where
\begin{eqnarray}
S_1(x,y)=\sum_{i\geq 0} \textmd{diag}[c_i(s;x,y)] \Lambda^{-i},&&S_2(x,y)=\sum_{i\geq 0} \textmd{diag}[c_i'(s;x,y)] \Lambda^{i} \label{smatrices}
\end{eqnarray}
and
\begin{equation}\label{wmatrices}
    W_1(x,y)=S_1(x,y)e^{\xi(x,\Lambda)},\quad W_2(x,y)=S_2(x,y)e^{\xi(y,\Lambda^{-1})}
\end{equation}
with $c_0(s;x,y)=1$ and $c_o'(s;x,y)\neq 0$ for any $s$, and $\xi(x,\Lambda^{\pm1})=\sum_{n\geq0}x_{2n+1}\Lambda^{\pm 2n+1}$.  For the B type Toda hierarchy, under an appropriate choice $(W_1,W_2)$ satisfies
\begin{equation}\label{bcwconstraints}
    J^{-1}W_i^TJ=W_i^{-1}\ , i=1,2.
\end{equation}
 For the C type Toda hierarchy, under an appropriate choice $(W_1,W_2)$ satisfies
\begin{equation}\label{bcwconstraints}
    K^{-1}W_i^TK=W_i^{-1}\ , i=1,2.
\end{equation}
The wave operators evolve as
\begin{eqnarray}
\pa_{x_{2n+1}}S_1&=&-(L_1^{2n+1})_-S_1,\quad \pa_{y_{2n+1}}S_1=-(L_2^{2n+1})_-S_1,\label{sevolution}\\
\pa_{x_{2n+1}}W_1&=&(L_1^{2n+1})_+W_1,\quad \pa_{y_{2n+1}}W_1=-(L_2^{2n+1})_-W_2,\label{wevolution}
\end{eqnarray}
\begin{eqnarray}
\pa_{x_{2n+1}}S_2&=&(L_1^{2n+1})_+S_2,\quad \pa_{y_{2n+1}}S_2=(L_2^{2n+1})_+S_2,\label{sevolution}\\
\pa_{x_{2n+1}}W_2&=&(L_1^{2n+1})_+W_2,\quad \pa_{y_{2n+1}}W_2=-(L_2^{2n+1})_-W_2.\label{wevolution}
\end{eqnarray}

 At last, we end this section with the introduction of the additional symmetries of the BTH(CTH). The Orlov-Shulman operator\cite{cheng2011} is defined
as
\begin{equation}\label{osoperator}
    M_1=W_1\varepsilon W_1^{-1},\ \ M_2=W_2\varepsilon^*W_2^{-1},
\end{equation}
where
$$\varepsilon=\rm{diag}[s]\Lambda^{-1},\quad \varepsilon^*=-J\varepsilon J^{-1},$$
 satisfying
\begin{eqnarray}
&&[L_i,M_i]=1, \nonumber\\
&&\pa_{x_{2n+1}}M_i=[(L^{2n+1}_1)_+,M_i],
\pa_{y_{2n+1}}M_i=[-(L_2^{2n+1})_-,M_i]. \label{osproperty}
\end{eqnarray}
To construct the Block symmetry of the BTH, the following lemma should be introduced.
\begin{lemma}\label{MBTHlemma}
The following identities hold true
\begin{eqnarray}
 \Lambda^{-1}\varepsilon  \Lambda=J^{-1}\varepsilon^T J,&&\Lambda\varepsilon^*  \Lambda^{-1}=J^{-1}\varepsilon^{*T} J,\label{blemma}
\end{eqnarray}
\begin{eqnarray}
\varepsilon=K\varepsilon^T K^{-1},&&\varepsilon^* =K\varepsilon^{*T} K^{-1}.\label{clemma}
\end{eqnarray}
\end{lemma}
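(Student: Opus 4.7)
The plan is to verify each of the four identities by direct entrywise computation, exploiting the fact that all the matrices involved are supported on a single off-diagonal, so every product collapses to a single term once the delta functions are resolved. The ingredients are
\begin{equation*}
\Lambda_{ij}=\delta_{j,i+1},\quad (\Lambda^{-1})_{ij}=\delta_{j,i-1},\quad J_{ij}=(-1)^{i}\delta_{i+j,0},\quad \varepsilon_{ij}=i\,\delta_{j,i-1},
\end{equation*}
together with the easy consequences $J^{2}=I$ (so $J^{-1}=J$), $K=\Lambda J$ and $K^{-1}=J\Lambda^{-1}$; with these the lemma becomes a short index calculation.

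First I would treat the two B-type identities (\ref{blemma}). For $\Lambda^{-1}\varepsilon\Lambda=J^{-1}\varepsilon^{T}J$, both sides expand to $(i-1)\delta_{j,i-1}$: the left-hand side immediately from the three shifts, and the right-hand side after the two delta substitutions $k=-i$ and $l=1-i$, which cause the signs $(-1)^{i}(-1)^{1-i}=-1$ to combine with the factor $(1-i)$ extracted from $\varepsilon^{T}$. The second identity $\Lambda\varepsilon^{*}\Lambda^{-1}=J^{-1}\varepsilon^{*T}J$ is strictly analogous once one records the explicit shape $(\varepsilon^{*})_{ij}=-i\,\delta_{j,i+1}$ obtained by unfolding $\varepsilon^{*}=-J\varepsilon J^{-1}$; both sides then collapse to $-(i+1)\delta_{j,i+1}$ by the same two-substitution pattern.

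The C-type identities (\ref{clemma}) would then be deduced without any fresh computation. Writing
\begin{equation*}
K\,\varepsilon^{T}\,K^{-1}=(\Lambda J)\,\varepsilon^{T}\,(J\Lambda^{-1})=\Lambda\bigl(J\varepsilon^{T}J\bigr)\Lambda^{-1},
\end{equation*}
and inserting the first B-type identity $J\varepsilon^{T}J=\Lambda^{-1}\varepsilon\Lambda$ collapses the outer $\Lambda$ and $\Lambda^{-1}$ against the inner ones and yields $K\varepsilon^{T}K^{-1}=\varepsilon$; repeating the same manipulation with $\varepsilon^{*}$ in place of $\varepsilon$ and invoking the second B-type identity $J\varepsilon^{*T}J=\Lambda\varepsilon^{*}\Lambda^{-1}$ produces $\varepsilon^{*}=K\varepsilon^{*T}K^{-1}$.

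The only real obstacle is bookkeeping: tracking the alternating signs $(-1)^{i}$ that arise each time two $J$-factors meet through a delta of the form $\delta_{k,-i}$, and applying the shifts $\Lambda^{\pm 1}$ in the correct order. In each independent computation a pair of such $J$-signs combines to $(-1)^{-1}=-1$, which precisely absorbs the minus sign supplied by the definition of $\varepsilon$ or by $\varepsilon^{*}=-J\varepsilon J^{-1}$. Once the sign conventions and the direction of each shift are fixed, no structural argument beyond the delta-function bookkeeping is required.
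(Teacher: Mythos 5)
The paper states this lemma without proof, so there is nothing to compare your argument against; judged on its own merits, your entrywise verification of the two identities in \eqref{blemma} is correct (both sides of the first do reduce to $(i-1)\delta_{j,i-1}$ and both sides of the second to $-(i+1)\delta_{j,i+1}$, with $(\varepsilon^*)_{ij}=-i\,\delta_{j,i+1}$), and your deduction of the first identity in \eqref{clemma} is also correct, since there $K\varepsilon^TK^{-1}=\Lambda(J\varepsilon^TJ)\Lambda^{-1}=\Lambda(\Lambda^{-1}\varepsilon\Lambda)\Lambda^{-1}=\varepsilon$ genuinely telescopes.

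However, your derivation of the last identity fails. The second B-type identity reads $J\varepsilon^{*T}J=\Lambda\varepsilon^{*}\Lambda^{-1}$, with the conjugation in the \emph{opposite} direction from the first one, so
\begin{equation*}
K\varepsilon^{*T}K^{-1}=\Lambda\bigl(J\varepsilon^{*T}J\bigr)\Lambda^{-1}=\Lambda\bigl(\Lambda\varepsilon^{*}\Lambda^{-1}\bigr)\Lambda^{-1}=\Lambda^{2}\varepsilon^{*}\Lambda^{-2},
\end{equation*}
and the shifts stack rather than cancel. Since $\varepsilon^{*}$ has non-constant diagonal weight, $\Lambda^{2}\varepsilon^{*}\Lambda^{-2}\neq\varepsilon^{*}$: entrywise one finds $(K\varepsilon^{*T}K^{-1})_{ij}=-(i+2)\delta_{j,i+1}$ while $(\varepsilon^{*})_{ij}=-i\,\delta_{j,i+1}$. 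So with the definition $\varepsilon^{*}=-J\varepsilon J^{-1}$ actually given in the paper, the identity $\varepsilon^{*}=K\varepsilon^{*T}K^{-1}$ is simply false; it becomes true only if one redefines $\varepsilon^{*}$ in the C-type setting, e.g.\ as $\varepsilon^{*}=-K\varepsilon K^{-1}$, which gives $(\varepsilon^{*})_{ij}=-(i+1)\delta_{j,i+1}$ and still satisfies $[\Lambda^{-1},\varepsilon^{*}]=1$. Had you carried out the same delta-function bookkeeping for this fourth identity that you did for the first two, instead of invoking a cancellation that does not occur, you would have detected the discrepancy; as written, the final step of your proof asserts an equality that the computation does not support.
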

For the BTH, using the above lemma, one can derive
\begin{eqnarray}
{M_i}^T&=&J{L_i}^{-1}{M_i}{L_i}J^{-1}.\label{bmtransposeB}
\end{eqnarray}
For the CTH, using the above lemma, one can derive
\begin{eqnarray}
{M_i}^T&=&K{M_i}K^{-1}.\label{bmtransposeC}
\end{eqnarray}
The additional symmetry \cite{cheng2011} of the BTH can be defined by
introducing the additional independent variables $x_{m,l}$ and
$y_{m,l}$,
\begin{equation}\label{bctladdsym}
    \pa_{x_{m,l}}W_1=-A_{ml}(M_1,L_1)_-W_1,\quad \pa_{y_{m,l}}W_1=-A_{ml}(M_2,L_2)_-W_1,
\end{equation}
\begin{equation}\label{bctladdsym}
    \pa_{x_{m,l}}W_2=A_{ml}(M_1,L_1)_+W_2,\quad \pa_{y_{m,l}}W_2=A_{ml}(M_2,L_2)_+W_2,
\end{equation}
where
\begin{equation}\label{aBTHaddsymm}
A_{ml}({M_i},{L_i})={M_i}^m{L_i}^l-(-1)^l{L_i}^{l-1}{M_i}^m{L_i}.
\end{equation}
For the case of the CTH, the operator $A_{ml}$ will become
\begin{equation}\label{aCTHaddsymm}
A_{ml}({M_i},{L_i})={M_i}^m{L_i}^l-(-1)^l{L_i}^{l}{M_i}^m.
\end{equation}
These additional flows form a coupled $W_{\infty}$ Lie algebra \cite{cheng2011}.

\section{The even constrained BTH(CTH)}
%%%%%%%%%%%%%%%%%%%%%%%%%%%%%%%%%%%%%%%%%%%%%%%%%%%%%%%
In this section, for a new constrained BTH(CTH),
 the Lax operator $L$ is given by  an infinite matrices $L$ as
\begin{equation}\label{laxoperator}
    L=L_1^{2N}=L_2^{2M}=\sum _{-2M<i\leq 2N}\textmd{diag}[a_i(s)]\Lambda^i,
\end{equation}
with
$a_{2N}(s)=1 ,$
and for the BTH, it satisfies the B type  constraint
\begin{equation}
    L^T=JLJ^{-1},\
\end{equation}
and for the CTH, it satisfies the C type  constraint
\begin{equation} L^T=KLK^{-1}.
\end{equation}
Then the constrained  BTH(CTH)  hierarchy is defined in the Lax forms as
\begin{equation}\label{bctlhierarchy}
    \pa_{x_{2n+1}}L=[-(L^{\frac{2n+1}{2N}})_-,L]\ \ \ \textmd{and}\ \ \ \pa_{y_{2n+1}}L=[-(L^{\frac{2n+1}{2N}})_-,L],\ \
\end{equation}
\begin{equation}\label{bctlhierarchy}
    \pa_{x_{2n+1}}L=[(L^{\frac{2n+1}{2M}})_+,L]\ \ \ \textmd{and}\ \ \ \pa_{y_{2n+1}}L=[(L^{\frac{2n+1}{2M}})_+,L],\ \ \ n=0,1,2.\cdots
\end{equation}

The Lax operator of the constrained BTH(CTH) (\ref{bctlhierarchy})
has the representation
\[
L&=&W_1\Lambda^{2N} W_1^{-1}=W_2\Lambda^{-2M}W_2^{-1},
\] where
\begin{eqnarray}
S_1(x,y)=\sum_{i\geq 0} \textmd{diag}[c_i(s;x,y)] \Lambda^{-i},&&S_2(x,y)=\sum_{i\geq 0} \textmd{diag}[c_i'(s;x,y)] \Lambda^{i} \label{smatrices}
\end{eqnarray}
and
\begin{equation}\label{wmatrices}
    W_1(x,y)=S_1(x,y)e^{\xi(x,\Lambda)},\quad W_2(x,y)=S_2(x,y)e^{\xi(y,\Lambda^{-1})}
\end{equation}
with $c_0(s;x,y)=1$ and $c_o'(s;x,y)\neq 0$ for any $s$, and $\xi(x,\Lambda^{\pm1})=\sum_{n\geq0}x_{2n+1}\Lambda^{\pm 2n+1}$.  Under an appropriate choice $(W_1,W_2)$ of the constrained BTH(CTH) satisfies
\begin{equation}\label{bcwconstraints}
    J^{-1}W_i^TJ=W_i^{-1}, (K^{-1}W_i^TK=W_i^{-1}) , i=1,2.
\end{equation}
The wave operators evolve according to
\begin{eqnarray}
\pa_{x_{2n+1}}S_1&=&-(L^{\frac{2n+1}{2N}})_-S_1,\quad \pa_{y_{2n+1}}S_1=-(L^{\frac{2n+1}{2M}})_-S_1,\label{sevolution}\\
\pa_{x_{2n+1}}W_1&=&(L^{\frac{2n+1}{2N}})_+W_1,\quad \pa_{y_{2n+1}}W_1=-(L^{\frac{2n+1}{2M}})_-W_1,\label{wevolution}
\end{eqnarray}
\begin{eqnarray}
\pa_{x_{2n+1}}S_2&=&(L^{\frac{2n+1}{2N}})_+S_2,\quad \pa_{y_{2n+1}}S_2=(L^{\frac{2n+1}{2M}})_+S_2,\label{sevolution}\\
\pa_{x_{2n+1}}W_2&=&(L^{\frac{2n+1}{2N}})_+W_2,\quad \pa_{y_{2n+1}}W_2=-(L^{\frac{2n+1}{2M}})_-W_2.\label{wevolution}
\end{eqnarray}

The Orlov-Shulman operator $\bar M_i$ will be defined as
as
\begin{equation}\label{osoperator}
   \bar M_1=W_1\varepsilon_{2N} W_1^{-1},\ \ \bar M_2=W_2\varepsilon_{-2M}^*W_2^{-1},
\end{equation}
where
$$\varepsilon_{2N}=\frac{1}{2N}\rm{diag}[s]\Lambda^{-2N},\quad \varepsilon_{-2M}^*=-\frac{1}{2M}\varepsilon^T\Lambda^{2M},$$
 satisfying
\begin{eqnarray}
&&[L,\bar M_i]=1, \nonumber\\
&&\pa_{x_{2n+1}}\bar M_i=[(L^{\frac{2n+1}{2N}})_+,\bar M_i],
\pa_{y_{2n+1}}\bar M_i=[-(L^{\frac{2n+1}{2M}})_-,\bar M_i]. \label{osproperty}
\end{eqnarray}

\begin{lemma}\label{asymM-M2}
The difference of two Orlov-Schulman operators $\bar M_i$ for constrained BTH hierarchy has following B type property:
\begin{align}\label{leidentity1}
L^T(\bar M_1-\bar M_2)^T=J(L\bar M_1-L\bar M_2)J^{-1},
\end{align}
and for constrained CTH hierarchy has following C type property:
\begin{align}
\label{leidentity2}L^T(\bar M_1-\bar M_2)^T=K(L\bar M_1-L\bar M_2)K^{-1}.
\end{align}
\end{lemma}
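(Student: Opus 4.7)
My plan is to express both sides in terms of the wave operators $W_i$ and then exploit the $B$-type (resp.\ $C$-type) constraint $W_i^T=JW_i^{-1}J^{-1}$ (resp.\ $W_i^T=KW_i^{-1}K^{-1}$) together with Lemma~\ref{MBTHlemma}.

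First I would rewrite the left-hand side in a more tractable form. Because $[L,\bar M_i]=1$ for $i=1,2$, we have $(\bar M_1-\bar M_2)L=L(\bar M_1-\bar M_2)$, and a single application of $(AB)^T=B^TA^T$ yields
\begin{equation*}
L^T(\bar M_1-\bar M_2)^T=\bigl((\bar M_1-\bar M_2)L\bigr)^T=\bigl(L\bar M_1-L\bar M_2\bigr)^T.
\end{equation*}
So the lemma reduces to showing that the operator $L\bar M_1-L\bar M_2$ satisfies the same $B$-type (resp.\ $C$-type) constraint $X^T=JXJ^{-1}$ (resp.\ $X^T=KXK^{-1}$) as $L$ itself.

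Next I would substitute the dressing formulas $L=W_1\Lambda^{2N}W_1^{-1}=W_2\Lambda^{-2M}W_2^{-1}$ and $\bar M_i=W_ie_iW_i^{-1}$ with $e_1=\varepsilon_{2N}$, $e_2=\varepsilon^{*}_{-2M}$. A direct computation in the symbol calculus shows that $\Lambda^{2N}\varepsilon_{2N}$ and $\Lambda^{-2M}\varepsilon^{*}_{-2M}$ have simple explicit forms built from $\mathrm{diag}[s]$ and a power of $\Lambda$, so each $L\bar M_i$ is $W_i$-conjugate to an easily transposed operator. Its transpose can then be evaluated by combining $(W_i^{-1})^T=JW_iJ^{-1}$ with the elementary identities $J\,\mathrm{diag}[f(s)]\,J^{-1}=\mathrm{diag}[f(-s)]$ and $J\Lambda J^{-1}=-\Lambda^{-1}$, both of which follow from $J^2=I$ and the definition of $J$.

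The crucial observation is that the difference $L\bar M_1-L\bar M_2$ is precisely what one needs in order to eliminate the anomalous contributions. The naive analogue of Lemma~\ref{MBTHlemma} for the shifted operators $\varepsilon_{2N}$ and $\varepsilon^{*}_{-2M}$ fails individually: the transposes pick up an extra piece proportional to $\mathrm{diag}[s]$ which, under dressing, becomes a multiple of $\bar M_iL$. Because of the specific normalizations chosen in $\varepsilon_{2N}$ and $\varepsilon^{*}_{-2M}$, this extra piece is really just the universal central term forced by $[L,\bar M_i]=1$; it therefore has the same form for $i=1,2$ and cancels upon subtracting. Lemma~\ref{MBTHlemma} supplies exactly the identities needed to isolate and track this anomaly. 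For the $C$-type version one repeats the whole argument with $J$ replaced by $K=\Lambda J$ and the first identity in Lemma~\ref{MBTHlemma} replaced by $\varepsilon=K\varepsilon^TK^{-1}$, $\varepsilon^{*}=K\varepsilon^{*T}K^{-1}$; the relevant conjugation identity is then $K\,\mathrm{diag}[f(s)]\,K^{-1}=\mathrm{diag}[f(1-s)]$, and the same cancellation of anomalies goes through. The main obstacle throughout is the careful bookkeeping of signs stemming from $J\Lambda J=-\Lambda^{-1}$ and verifying that the anomalous terms are genuinely $i$-independent, so that only the clean $B$-type (resp.\ $C$-type) symmetric piece survives in the difference.
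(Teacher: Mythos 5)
Your strategy is sound and, at its core, it is the same as the paper's: both arguments rest on the observation that transposing each $\bar M_i$ individually produces an anomalous term which is $i$-independent (it is $L^{-1}$ up to sign, i.e.\ the central term of $[L,\bar M_i]=1$ divided by $L$) and therefore cancels in the difference $\bar M_1-\bar M_2$ after left-multiplication by $L$. The routes differ only in bookkeeping: you propose to dress the bare symbols $\varepsilon_{2N}$, $\varepsilon^{*}_{-2M}$ and transpose them directly via Lemma~\ref{MBTHlemma} and the conjugation identities for $J$ and $\Lambda$, whereas the paper first rewrites $\bar M_1=\tfrac{1}{2N}M_1L_1^{1-2N}$ and $\bar M_2=-\tfrac{1}{2M}M_2L_2^{1-2M}$ and then simply reuses the already-established transpose identities \eqref{bmtransposeB} and \eqref{bmtransposeC} for $M_i$ together with $[L_i,M_i]=1$; the anomalies then appear explicitly as $-L_1^{-2N}=-L^{-1}$ and $+L_2^{-2M}=+L^{-1}$, which makes the cancellation visible with less symbol calculus. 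Your opening reduction $L^T(\bar M_1-\bar M_2)^T=\bigl(L\bar M_1-L\bar M_2\bigr)^T$ via the commutativity of $L$ with $\bar M_1-\bar M_2$ is exactly the step the paper invokes at the end of its proof. Two small corrections to your sketch: the anomalous piece is a power of $L$ (becoming the scalar $1$ after multiplication by $L$), not ``a multiple of $\bar M_iL$'' as you first state --- your subsequent identification of it with the universal central term is the correct characterization; and the $C$-type conjugation identity should read $K\,\mathrm{diag}[f(s)]\,K^{-1}=\mathrm{diag}[f(-s-1)]$ rather than $\mathrm{diag}[f(1-s)]$, consistent with the CTH constraint $a_i^{(k)}(s)=(-1)^{i+1}a_i^{(k)}(-s-i-1)$. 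Neither slip affects the viability of the overall argument.
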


\begin{proof}
It is easy to find  the two Orlov-Schulman operators can be expressed as
\[ \label{cMandM}\bar M_1=\frac{M_1L_1^{1-2N}}{2N},\ \ \bar M_2=-\frac{ M_2L_2^{1-2M}}{2M}.
\]

Putting eq.\eqref{cMandM} into $(\bar M_1-\bar M_2)^T$ can lead to
\begin{align}
&(\bar M_1-\bar M_2)^T=\frac{JL_1^{-2N}M_1L_1J^{-1}}{2N}+\frac{JL_2^{-2M}M_2L_2^{-1} J^{-1}}{2M}\\
&=\frac{JL_1^{-2N}M_1L_1J^{-1}}{2N}+\frac{JL_2^{-2M}M_2L_2^{-1} J^{-1}}{2M}\\
&=\frac{J(M_1L_1^{1-2N}-2NL_1^{-2N})J^{-1}}{2N}+\frac{J(M_2L_2^{1-2M}+2ML_2^{-2}) J^{-1}}{2M},
\end{align}
which can further lead to eq.\eqref{leidentity1}.
For the CTH, one can do the similar calculation as
\begin{align}
&(\bar M_1-\bar M_2)^T=\frac{KL_1^{1-2N}M_1K^{-1}}{2N}+\frac{KL_2^{1-2M}M_2 K^{-1}}{2M}\\
&=\frac{KL_1^{1-2N}M_1K^{-1}}{2N}+\frac{KL_2^{1-2M}M_2 K^{-1}}{2M}\\
&=\frac{K(M_1L_1^{1-2N}-2NL_1^{-2N})K^{-1}}{2N}+\frac{K(M_2L_2^{1-2M}+2ML_2^{-2}) K^{-1}}{2M},
\end{align}
which can further lead  to eq.\eqref{leidentity2}

In above calculation, the commutativity between $L$ and $\bar M_1-\bar M_2$ is already used.
Till now, the proof is finished.
\end{proof}

For the constrained BTH(CTH), we need the following operator
\begin{align}
\B_{m,l}=(\bar M_1-\bar M_2)^{m}L^l,\ \  m\in \Zop, l\in \Z_+.
\end{align}
One can easily check that for the BTH
\begin{align}
\B_{m,l}^T=J\B_{m,l}J^{-1},\ \  m\in \Zop,
\end{align}
and for the CTH
\begin{align}
\B_{m,l}^T=K\B_{m,l}K^{-1},\ \  m\in \Zop.
\end{align}
That means it is reasonable to define additional flows of the constrained BTH(CTH) as
\begin{align}\label{blockflow}
&\frac{\partial L}{\partial c_{m,l}}=[-({\B}_{m,l})_-, L],\ \ m\in \Zop, l\in \Z_+.
\end{align}

\begin{proposition}
For the BTH(CTH), the flows \eqref{blockflow} can commute with original flows of  the BTH(CTH), namely,
\begin{equation*}
\left[\frac{\partial}{\partial c_{m,l}}, \frac{\partial}{\partial x_k}\right]=0, \quad
\left[\frac{\partial}{\partial c_{m,l}}, \frac{\partial}{\partial y_k}\right]=0, \qquad l\in \Z_{+}, ~m,k\in\Zop,
\end{equation*}
which hold in the sense of acting on  $W_i$ or $L.$

\end{proposition}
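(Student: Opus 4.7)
The plan is to extend the additional flow $\partial/\partial c_{m,l}$ from $L$ to the dressing operators $W_1$ and $W_2$, and then check commutativity with the $x$- and $y$-flows at the wave-operator level by a standard Zakharov--Shabat-type calculation. The crucial ingredient throughout is the identity $[\B_{m,l}, L] = 0$.

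First, I would lift the $c_{m,l}$-flow to the wave operators by
\begin{equation*}
\partial_{c_{m,l}} W_1 = -(\B_{m,l})_- W_1, \qquad \partial_{c_{m,l}} W_2 = (\B_{m,l})_+ W_2.
\end{equation*}
For these two prescriptions to descend to the same flow on $L = W_1 \Lambda^{2N} W_1^{-1} = W_2 \Lambda^{-2M} W_2^{-1}$, one needs $[-(\B_{m,l})_-, L] = [(\B_{m,l})_+, L]$, i.e.\ $[\B_{m,l}, L]=0$. This holds because $[\bar M_1, L] = [\bar M_2, L] = 1$ forces $[\bar M_1 - \bar M_2, L] = 0$, so $\B_{m,l} = (\bar M_1 - \bar M_2)^m L^l$ commutes with every fractional power of $L$ appearing in the hierarchy (using $[L^p, \bar M_i] = pL^{p-1}$ induced from $[L, \bar M_i]=1$). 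Compatibility of the lifted flow with the constraint $J^{-1} W_i^T J = W_i^{-1}$ (resp.\ with $K$) follows from Lemma~\ref{asymM-M2} together with the oddness of $m$, which together yield $\B_{m,l}^T = J\B_{m,l}J^{-1}$ (resp.\ $K\B_{m,l}K^{-1}$).

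The core step is the commutativity on $W_1$. From $\partial_{x_{2k+1}} \bar M_i = [(L^{(2k+1)/(2N)})_+, \bar M_i]$ and $\partial_{x_{2k+1}} L = [(L^{(2k+1)/(2N)})_+, L]$ one obtains the induced evolution $\partial_{x_{2k+1}} \B_{m,l} = [(L^{(2k+1)/(2N)})_+, \B_{m,l}]$, while by definition $\partial_{c_{m,l}} L^{(2k+1)/(2N)} = [-(\B_{m,l})_-, L^{(2k+1)/(2N)}]$. Writing $A = L^{(2k+1)/(2N)}$ and $B = \B_{m,l}$, the commutator $[\partial_{c_{m,l}}, \partial_{x_{2k+1}}]W_1$ reduces to
\begin{equation*}
\bigl(-[B_-, A]_+ + [A_+, B]_- - [A_+, B_-]\bigr) W_1.
\end{equation*}
The first two pieces simplify via $[B_-, A_-]_+ = 0$ and $[A_+, B_+]_- = 0$ (since the $\pm$-split is by non-negative versus strictly negative powers of $\Lambda$) to $-[B_-, A_+]_+ + [A_+, B_-]_-$; combining with the last term and $[A_+, B_-] = -[B_-, A_+]$, the whole expression collapses to $0$. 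The case $[\partial_{c_{m,l}}, \partial_{y_{2k+1}}]$ is identical with $(L^{(2k+1)/(2N)})_+$ replaced by $-(L^{(2k+1)/(2M)})_-$, noting that $\bar M_1 - \bar M_2$ commutes with \emph{both} $L^{(2k+1)/(2N)}$ and $L^{(2k+1)/(2M)}$ since these are fractional powers of the same operator $L$. The $W_2$ case is parallel, and commutativity on $L$ then follows by conjugation from $L = W_1\Lambda^{2N}W_1^{-1} = W_2\Lambda^{-2M}W_2^{-1}$.

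I expect the main difficulty to be not the projector bookkeeping, which unravels almost automatically, but rather arranging the lifted flow so that it is simultaneously consistent with both dressing representations of $L$ and with the B(C)-type constraint. This is precisely why the construction uses the difference $\bar M_1 - \bar M_2$ rather than $\bar M_1$ or $\bar M_2$ separately, and why Lemma~\ref{asymM-M2} is the indispensable input.
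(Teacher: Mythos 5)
The paper states this proposition without any proof, so there is nothing to compare against line by line; your argument is the standard one and, as far as I can check, it is correct. The lift $\partial_{c_{m,l}}W_1=-(\B_{m,l})_-W_1$, $\partial_{c_{m,l}}W_2=(\B_{m,l})_+W_2$ is consistent precisely because $[L,\bar M_1]=[L,\bar M_2]=1$ forces $[\bar M_1-\bar M_2,L]=0$ and hence $[\B_{m,l},L]=0$, and your Zakharov--Shabat computation for $[\partial_{c_{m,l}},\partial_{x_k}]W_1$ is right: the expression $-[B_-,A]_+ +[A_+,B]_- -[A_+,B_-]$ collapses identically using only $[B_-,A_-]_+=0$ and $[A_+,B_+]_-=0$. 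One small imprecision: the $y$-flow case is not literally ``identical,'' since $\partial_{y_k}W_1$ and $\partial_{c_{m,l}}W_1$ are \emph{both} given by $(-)$-projections, so the analogous cancellation leaves a residual term $[\B_{m,l},L^{k/(2M)}]_-$ (and the $W_2$/$x_k$ case leaves $[\B_{m,l},L^{k/(2N)}]_+$); these vanish only because $\B_{m,l}$ commutes with all fractional powers of $L$, a fact you do state but should make the explicit last step of the cancellation rather than a parenthetical. Your closing observation --- that the real content is the consistency of the lift with both dressing representations and with the B(C)-type constraint via Lemma \ref{asymM-M2}, which is exactly why the difference $\bar M_1-\bar M_2$ is used --- is the right diagnosis of why the construction is set up as it is, and supplies the justification the paper leaves implicit.
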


\begin{theorem}\label{thm-MLs}
The flows in eq.\eqref{blockflow} about  additional symmetries of constrained BTH(CTH) compose following Block type Lie algebra

\[[\partial_{c_{m,l}},\partial_{c_{s,k}}]=(km-s l)\partial_{c_{m+s-1,k+l-1}},\ \  m,s\in \Zop, k,l\in \Z_+,\]

which holds in the sense of acting on  $W_i$ or $L.$
\end{theorem}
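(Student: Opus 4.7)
The plan is to verify the commutator relation by evaluating both sides on $W_1$, following the standard template for Lie-algebra closure of additional symmetries. First, I would translate the flow \eqref{blockflow} into its action on the wave operators (matching the sign convention of \eqref{bctladdsym}): $\pa_{c_{m,l}} W_1 = -(\B_{m,l})_- W_1$ and $\pa_{c_{m,l}} W_2 = (\B_{m,l})_+ W_2$. Dressing then yields $\pa_{c_{m,l}} \bar M_1 = [-(\B_{m,l})_-, \bar M_1]$ and $\pa_{c_{m,l}} \bar M_2 = [(\B_{m,l})_+, \bar M_2]$. Subtracting and using $(\B_{m,l})_+ = \B_{m,l} - (\B_{m,l})_-$, the flow of $\bar M := \bar M_1 - \bar M_2$ reduces to $\pa_{c_{m,l}} \bar M = [-(\B_{m,l})_-, \bar M] - [\B_{m,l}, \bar M_2]$. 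Invoking the canonical commutator $[L^l, \bar M_2] = l L^{l-1}$ (from $[L, \bar M_i] = 1$ in \eqref{osproperty}) together with $[\bar M, \bar M_2] = 0$ then produces the clean ``Heisenberg-type'' correction
\begin{equation*}
\pa_{c_{m,l}} \bar M = [-(\B_{m,l})_-, \bar M] - l\, \B_{m, l-1}.
\end{equation*}

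Second, I would propagate this through the Leibniz rule for $\B_{s,k} = \bar M^s L^k$. Since $[L, \bar M] = 0$, powers of $L$ and $\bar M$ commute freely, and the telescoping sum $\pa_{c_{m,l}} \bar M^s = \sum_{i=0}^{s-1} \bar M^i (\pa_{c_{m,l}} \bar M) \bar M^{s-1-i}$ collapses to
\begin{equation*}
\pa_{c_{m,l}} \B_{s,k} = [-(\B_{m,l})_-, \B_{s,k}] - sl\, \B_{m+s-1, k+l-1}.
\end{equation*}

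Third, I would substitute into the commutator formula
\begin{equation*}
[\pa_{c_{m,l}}, \pa_{c_{s,k}}] W_1 = \bigl\{-(\pa_{c_{m,l}} \B_{s,k})_- + (\pa_{c_{s,k}} \B_{m,l})_- - [(\B_{m,l})_-, (\B_{s,k})_-]\bigr\} W_1,
\end{equation*}
after which the $sl$ and $km$ contributions combine directly into $(km - sl)(\B_{m+s-1, k+l-1})_-$. The remaining ad-type cross terms are handled using the key identity $[\B_{m,l}, \B_{s,k}] = 0$ (immediate from $[L, \bar M] = 0$): after expanding into $(\cdot)_\pm$ blocks, they reduce to $([(\B_{m,l})_+, (\B_{s,k})_+])_-$, which vanishes because the bracket of two $(\cdot)_+$-operators again sits in $(\cdot)_+$. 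All residuals cancel, leaving precisely $-(km - sl)(\B_{m+s-1, k+l-1})_- W_1 = (km - sl)\, \pa_{c_{m+s-1, k+l-1}} W_1$. The identity on $L$ is then automatic from $\pa_{c_{m,l}} L = [-(\B_{m,l})_-, L]$, and the CTH case is handled in parallel after the substitution $J \to K$.

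The main obstacle will be step one: pinning down the correction $-l\, \B_{m, l-1}$ in $\pa_{c_{m,l}} \bar M$. Because the flows on $\bar M_1$ and $\bar M_2$ employ opposite-signed projections, the cancellation succeeds only after invoking $[L, \bar M] = 0$ and the compatibility $[\bar M, \bar M_2] = 0$ intrinsic to the Orlov--Schulman data of the constrained hierarchy. Once this Heisenberg-type correction is secured, the rest of the argument is routine bookkeeping with $(\cdot)_\pm$ projections, closely parallel to the coupled $W_\infty$ closure proved in \cite{cheng2011}.
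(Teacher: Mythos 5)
The paper states Theorem \ref{thm-MLs} without any proof (as it does the preceding proposition), so there is nothing to compare against line by line; your argument follows the standard Block-symmetry computation of \cite{ourBlock,cheng2011}, which is evidently what the authors intend. Your overall architecture is right: extend the flow to $\pa_{c_{m,l}}W_1=-(\B_{m,l})_-W_1$, $\pa_{c_{m,l}}W_2=(\B_{m,l})_+W_2$ (consistent with \eqref{blockflow} because $[\B_{m,l},L]=0$), compute $\pa_{c_{m,l}}\B_{s,k}$, and antisymmetrize; your handling of the ad-type terms via $[\B_{m,l},\B_{s,k}]=0$ and the $(\cdot)_\pm$ bookkeeping is correct.

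The one genuine flaw is the identity $[\bar M,\bar M_2]=0$, i.e.\ $[\bar M_1,\bar M_2]=0$, which you invoke as ``intrinsic to the Orlov--Schulman data.'' It is not: $\bar M_1$ and $\bar M_2$ are dressed by \emph{different} wave operators, and all the hierarchy gives you is $[L,\bar M_1]=[L,\bar M_2]=1$, hence $[L,\bar M]=0$; the difference $\bar M$ lies in the (large) centralizer of $L$ and has no reason to commute with $\bar M_2$ (already for the bare operators one has $[\varepsilon,\varepsilon^*]=2\,\mathrm{diag}[s]\neq 0$). Consequently your intermediate formula $\pa_{c_{m,l}}\B_{s,k}=[-(\B_{m,l})_-,\B_{s,k}]-sl\,\B_{m+s-1,k+l-1}$ acquires an extra term $-\sum_{i=0}^{s-1}\bar M^{i}[\bar M^m,\bar M_2]\bar M^{s-1-i}L^{k+l}$. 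The proof is nevertheless salvageable without new ideas: expanding $[\bar M^m,\bar M_2]=\sum_{a=0}^{m-1}\bar M^a[\bar M_1,\bar M_2]\bar M^{m-1-a}$ shows the extra term equals $-\sum \bar M^{i+a}[\bar M_1,\bar M_2]\bar M^{m+s-2-i-a}L^{k+l}$ summed over $0\le i\le s-1$, $0\le a\le m-1$, which is \emph{symmetric} under the interchange $(m,l)\leftrightarrow(s,k)$ and therefore cancels identically in the commutator $[\pa_{c_{m,l}},\pa_{c_{s,k}}]$. You should either prove $[\bar M_1,\bar M_2]=0$ on the constrained locus (I do not see how) or, better, carry the extra term through and record this cancellation explicitly; only the antisymmetric part $-sl\,\B_{m+s-1,k+l-1}$ of the anomaly survives, and your conclusion then stands.
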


\section{ Multicomponent B(C) type Toda hierarchy}

In this section we will introduce the multicomponent B type Toda hierarchy (MBTH) and multicomponent C type Toda hierarchy (MCTH).
In the following, we denote $E_{\Z\times \Z}$ as the bi-infinite identity matrix and $E_{N\times N}$ as the $N\times N$ identity matrix.
We also denote $E_{kk}$ as a $N\times N$ matrix which is $1$ at the position of the k-th row and k-th column and $0$ for other elements.
The Lax operators $\L_1,\L_2$ of the MBTH(MCTH) are given by  a pair of infinite matrices
\begin{equation}\label{laxoperatorM}
    \L_1=\sum _{-\infty<i\leq 1}\textmd{diag}[b_i(s)]\bar\Lambda^i,\ \ \L_2=\sum _{-1\leq i<\infty}\textmd{diag}[c_i(s)]\bar\Lambda^i,
\end{equation}
where $b_i(s),c_i(s)$ are matrices of size $N\times N$ and  $\bar\Lambda^i=\Lambda^i\otimes E_{N\times N}$ and they
satisfy the B type(C Type)  constraint\cite{UT}
\begin{equation}\label{bctlconstr}
    \L_i^T=-\J\L_i\J^{-1}(\L_i^T=-\K\L_i\K^{-1}),
\end{equation}
where $\J=((-1)^i\delta_{i+j,0})_{i,j\in\mathbb{Z}}\otimes E_{N\times N}$, $\K=\Lambda J\otimes E_{N\times N}$.
Here the product $\otimes$ is the Kronecker product between a matrix of size $\Z\times \Z$ and a matrix of size $N\times N$. Let us first introduce some convenient notations as $\bar E_{kk}=E_{\Z\times \Z}\otimes E_{kk}$.

The Lax operators of the MBTH(MCTH) (\ref{bctlhierarchy})
can have the following dressing structure
\[
\L_1&=&\W_1\bar \Lambda \W_1^{-1}=\S_1\bar \Lambda \S_1^{-1},\\
\L_2&=&\W_2\bar \Lambda^{-1}\W_2^{-1}=\S_2\bar \Lambda^{-1}\S_2^{-1},
\] where
\begin{equation}\label{wmatrices}
    \W_1(x,y)=\S_1(x,y)(e^{\xi(x,\Lambda)}\otimes E_{N\times N}),\quad \W_2(x,y)=\S_2(x,y)(e^{\xi(y,\Lambda^{-1})}\otimes E_{N\times N}).
\end{equation}
Now we define matrix operators $C_{kk},\bar C_{kk},B_{jk},\bar B_{jk}$ as follows
\begin{align}\label{satoS}
\begin{aligned}
C_{kk}&:=\W_1\bar E_{kk}\W_1^{-1},\ \ \bar C_{kk}:= \W_2 \bar E_{kk} \W_2^{-1},\\
B_{jk}&:=\W_1\bar E_{kk}\bar \Lambda^j\W_1^{-1},\ \ \bar B_{jk}:= \W_2 \bar E_{kk}\bar \Lambda^{-j} \W_2^{-1}.
\end{aligned}
\end{align}
Now we give the definition of the multicomponent B(C) type Toda hierarchy(MBTH).
\begin{definition}The multicomponent B(C) type Toda hierarchy is a hierarchy in which the dressing operators $\S_1,\S_2$ satisfy following Sato equations
\begin{align}
\label{satoSt} \partial_{t_{jk}}\S_1&=-(B_{jk})_-\S_1,& \partial_{t_{jk}} \S_2&=(B_{jk})_+ \S_2,  \\
\label{satoSbart}
\partial_{\bar t_{jk}}\S_1&=-(\bar B_{jk})_- \S_1,& \partial_{\bar t_{jk}} \S_2&=(\bar B_{jk})_+ \S_2.\end{align}
\end{definition}
Then one can easily get the following proposition about $\W_1, \W_2.$

\begin{proposition}The matrix wave operators $\W_1, \W_2$ satisfy following Sato equations
\begin{align}
\label{Wjk} \partial_{t_{jk}}\W_1&=(B_{jk})_+\W_1,& \partial_{t_{jk}} \W_2&=(B_{jk})_+\W_2,  \\
\label{Wbjk}\partial_{\bar t_{jk}}\W_1&=-(\bar B_{jk})_- \W_1,& \partial_{\bar t_{jk}} \W_2&=-(\bar B_{jk})_- \W_2.  \end{align}
\end{proposition}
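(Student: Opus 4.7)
The plan is to differentiate the dressing relations $\W_1=\S_1(e^{\xi(x,\Lambda)}\otimes E_{N\times N})$ and $\W_2=\S_2(e^{\xi(y,\Lambda^{-1})}\otimes E_{N\times N})$ by Leibniz and reduce everything to the Sato equations \eqref{satoSt}--\eqref{satoSbart} together with the defining formulas \eqref{satoS} for $B_{jk}$ and $\bar B_{jk}$.

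For each of the four assertions, I would write $\partial_\tau \W_i = (\partial_\tau \S_i)\cdot F_i + \S_i\cdot(\partial_\tau F_i)$, where $F_i$ denotes the exponential factor. The first summand is handled directly by the stated Sato equation for $\S_i$, producing either $\pm(B_{jk})_{\pm}\W_i$ or $\pm(\bar B_{jk})_{\pm}\W_i$. The second summand is nonzero precisely when the time parameter $\tau$ appears in $F_i$; in that case it enters linearly as a coefficient of $\bar\Lambda^{\pm j}\bar E_{kk}$, so that $\partial_\tau F_i = \bar\Lambda^{\pm j}\bar E_{kk}F_i$ and the extra contribution takes the form $\S_i\bar E_{kk}\bar\Lambda^{\pm j}F_i$.

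The key structural observation I would invoke is that $\bar E_{kk}\bar\Lambda^{\pm j}$ commutes with both exponential factors, which is immediate from the Kronecker decompositions $\bar E_{kk}=E_{\Z\times\Z}\otimes E_{kk}$ and $\bar\Lambda=\Lambda\otimes E_{N\times N}$ together with the obvious separate commutativity in each tensor slot. This identifies the defining formulas $B_{jk}=\W_1\bar E_{kk}\bar\Lambda^j\W_1^{-1}$ and $\bar B_{jk}=\W_2\bar E_{kk}\bar\Lambda^{-j}\W_2^{-1}$ with their $\S_i$-dressed counterparts $\S_1\bar E_{kk}\bar\Lambda^j\S_1^{-1}$ and $\S_2\bar E_{kk}\bar\Lambda^{-j}\S_2^{-1}$. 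The extra Leibniz term then becomes exactly $B_{jk}\W_1$ or $\bar B_{jk}\W_2$, and the splitting $A=A_++A_-$ folds the Sato contribution together with this extra piece into the right-hand sides claimed in the proposition, e.g.\ via $-(B_{jk})_- + B_{jk} = (B_{jk})_+$.

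No serious obstacle is expected; the whole argument is a direct bookkeeping exercise around the Leibniz rule. The step deserving explicit verification is the Kronecker commutativity of $\bar E_{kk}\bar\Lambda^{\pm j}$ with the exponential factors, since this is what allows the $\S_i$-dressed and $\W_i$-dressed forms of $B_{jk}, \bar B_{jk}$ to be interchanged and what makes the extra exponential-derivative term recombine with the Sato term in the exact way needed.
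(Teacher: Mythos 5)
The paper offers no proof of this proposition (it is stated as an immediate consequence of the Sato equations for $\S_1,\S_2$), so there is nothing to compare line by line; your Leibniz-rule argument is the standard dressing computation and is the right route. Three of the four cases go through exactly as you describe: the commutativity of $\bar E_{kk}\bar\Lambda^{\pm j}$ with the exponential factors is correct (in each Kronecker slot the relevant factors are powers of $\Lambda$, respectively commuting diagonal matrices $E_{kk}$), it does identify $B_{jk}=\S_1\bar E_{kk}\bar\Lambda^{j}\S_1^{-1}$ and $\bar B_{jk}=\S_2\bar E_{kk}\bar\Lambda^{-j}\S_2^{-1}$, the two ``mixed'' cases have no extra Leibniz term at all, and $-(B_{jk})_-+B_{jk}=(B_{jk})_+$ closes the first identity. (Note that you are silently replacing the paper's printed exponential $e^{\xi(x,\Lambda)}\otimes E_{N\times N}$, which does not depend on $t_{jk}$ at all, by the intended multicomponent factor $\exp\bigl(\sum_{j,k}t_{jk}\bar\Lambda^{j}\bar E_{kk}\bigr)$; that replacement is necessary, since otherwise the extra term vanishes and the first identity would read $\partial_{t_{jk}}\W_1=-(B_{jk})_-\W_1$.)

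The one genuine gap is the fourth identity. With the sign you state, $\partial_{\bar t_{jk}}F_2=\bar\Lambda^{-j}\bar E_{kk}F_2$, the Leibniz rule gives $\partial_{\bar t_{jk}}\W_2=(\bar B_{jk})_+\W_2+\bar B_{jk}\W_2$, and $(\bar B_{jk})_++\bar B_{jk}=2(\bar B_{jk})_++(\bar B_{jk})_-$ is not $-(\bar B_{jk})_-$. The recombination $-A_-+A=A_+$ that you quote only applies when the Sato equation carries $-(\,\cdot\,)_-$, i.e.\ for $\W_1$ in the $t_{jk}$ direction; for $\W_2$ in the $\bar t_{jk}$ direction the Sato equation carries $+(\bar B_{jk})_+$, so the extra term must be $-\bar B_{jk}\W_2$, which forces the second exponential factor to be $\exp\bigl(-\sum_{j,k}\bar t_{jk}\bar\Lambda^{-j}\bar E_{kk}\bigr)$. (The same sign issue is already latent in the single-component formulas of Section~2, where $\pa_{y_{2n+1}}S_2=(L_2^{2n+1})_+S_2$ and $\pa_{y_{2n+1}}W_2=-(L_2^{2n+1})_-W_2$ are compatible only with $W_2=S_2e^{-\xi(y,\Lambda^{-1})}$.) Once that convention is fixed, $(\bar B_{jk})_+-\bar B_{jk}=-(\bar B_{jk})_-$ and all four identities follow; as written, your argument proves the fourth one with the wrong right-hand side.
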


 From the previous proposition we can derive the following  Lax equations for the Lax operators.
\begin{proposition}\label{Lax}
 The  Lax equations of the MBTH(MCTH) are as follows
   \begin{align}
\label{laxtjk}
  \partial_{t_{jk}} \L&= [(B_{jk})_+,\L],&
 \partial_{t_{jk}} C_{ss}&= [(B_{jk})_+,C_{ss}],&\partial_{t_{jk}} \bar C_{ss}&= [(B_{jk})_+,\bar C_{ss}],
\\
  \partial_{\bar t_{jk}} \L&= [ (\bar B_{jk})_+,\L],&
 \partial_{\bar t_{jk}} C_{ss}&= [(\bar B_{jk})_+,C_{ss}],&\partial_{\bar t_{jk}} \bar C_{ss}&= [(\bar B_{jk})_+,\bar C_{ss}].
 \end{align}

\end{proposition}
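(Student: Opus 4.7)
The plan is a dressing-method computation applied uniformly to each of the six identities. All three operators to be differentiated, namely $\L_i$, $C_{ss}$ and $\bar C_{ss}$, are of the sandwich form $\W\bar X\W^{-1}$ with $\W\in\{\W_1,\W_2\}$ and $\bar X$ a time-independent factor ($\bar\Lambda^{\pm 1}$ or $\bar E_{ss}$). Combining $\partial_\tau\W^{-1}=-\W^{-1}(\partial_\tau\W)\W^{-1}$ with any Sato equation of the form $\partial_\tau\W=A\W$ yields the template
\[
\partial_\tau(\W\bar X\W^{-1})=(\partial_\tau\W)\W^{-1}(\W\bar X\W^{-1})-(\W\bar X\W^{-1})(\partial_\tau\W)\W^{-1}=[A,\,\W\bar X\W^{-1}],
\]
so the entire argument reduces to reading off $A$ from the appropriate Sato equation and then, in the $\bar t_{jk}$ case, rewriting $-(\bar B_{jk})_-$ as $(\bar B_{jk})_+$.

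First I would dispose of the $t_{jk}$ row. The previous proposition gives $\partial_{t_{jk}}\W_i=(B_{jk})_+\W_i$ for both $i=1$ and $i=2$, so the template with $A=(B_{jk})_+$ immediately produces $\partial_{t_{jk}}\L=[(B_{jk})_+,\L]$ together with the matching identities for $C_{ss}$ and $\bar C_{ss}$, regardless of which dressing is used to define them. Next, for the $\bar t_{jk}$ row, the Sato equations read $\partial_{\bar t_{jk}}\W_i=-(\bar B_{jk})_-\W_i$ for $i=1,2$, so the template first delivers three identities of the form $\partial_{\bar t_{jk}}X=[-(\bar B_{jk})_-,X]$ for $X\in\{\L,C_{ss},\bar C_{ss}\}$.

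To reconcile this output with the stated form $[(\bar B_{jk})_+,\cdot]$ I would split $(\bar B_{jk})_+=\bar B_{jk}-(\bar B_{jk})_-$ and prove the commutativity $[\bar B_{jk},X]=0$ for each of the three choices of $X$. For the $\W_2$-dressed operators $\L_2$ and $\bar C_{ss}$ this is transparent: the pre-dressed matrices $\bar E_{kk}\bar\Lambda^{-j}$, $\bar\Lambda^{-1}$ and $\bar E_{ss}$ commute pairwise because they are Kronecker products whose $\bar\Lambda$-factor and $E_{kk}$-factor occupy independent tensor slots, and conjugation by $\W_2$ preserves commutators. The genuinely nontrivial input is the commutation of $\bar B_{jk}$ (built from $\W_2$) with the $\W_1$-dressed operators $\L_1$ and $C_{ss}$; I would trace this back to the Sato-Wilson compatibility between the two dressings that is implicit in the definition of the MBTH(MCTH), in direct analogy with the classical 2D Toda fact that $L_1$ and $L_2$ emerge from a common Birkhoff-type factorization. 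This compatibility step is where I expect the only real work; once it is in hand, $[-(\bar B_{jk})_-,X]=[(\bar B_{jk})_+,X]$ closes the proof for all three $X$ and completes the proposition.
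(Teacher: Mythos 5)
Your dressing template is precisely the computation the paper leaves implicit (the paper offers no written proof of this proposition beyond the phrase ``from the previous proposition we can derive''), and your handling of the $t_{jk}$ row is correct and complete: since $\partial_{t_{jk}}\W_i=(B_{jk})_+\W_i$ for both $i=1,2$, the identity $\partial_{t_{jk}}(\W_i\bar X\W_i^{-1})=[(B_{jk})_+,\W_i\bar X\W_i^{-1}]$ needs no further conversion. Likewise your reduction of the $\bar t_{jk}$ row to the commutativity $[\bar B_{jk},X]=0$, and your verification of it for the $\W_2$-dressed operators $\L_2$ and $\bar C_{ss}$ via the Kronecker-slot argument, is sound.

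The genuine gap is the final step you defer to an unproven ``Sato--Wilson compatibility'': the claim that $\bar B_{jk}=\W_2\bar E_{kk}\bar\Lambda^{-j}\W_2^{-1}$ commutes with the $\W_1$-dressed operators $\L_1$ and $C_{ss}$. This is false in general, and the one-component case already shows it: in the hierarchy of Section~2 one has $\partial_{y_{2n+1}}L_1=[-(L_2^{2n+1})_-,L_1]$ while $\partial_{y_{2n+1}}L_2=[(L_2^{2n+1})_+,L_2]$, and these two right-hand sides cannot be unified because $[L_2^{2n+1},L_1]\neq 0$; the Birkhoff-type factorization $\W_2=\W_1 g$ (with $g$ constant) only tells you that $\bar B_{jk}=\W_1\,g\bar E_{kk}\bar\Lambda^{-j}g^{-1}\,\W_1^{-1}$, and $g\bar E_{kk}\bar\Lambda^{-j}g^{-1}$ has no reason to commute with $\bar\Lambda$ or $\bar E_{ss}$. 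So the step $[-(\bar B_{jk})_-,X]=[(\bar B_{jk})_+,X]$ fails for $X\in\{\L_1,C_{ss}\}$, and what your argument actually proves for those operators is $\partial_{\bar t_{jk}}\L_1=[-(\bar B_{jk})_-,\L_1]$ and $\partial_{\bar t_{jk}}C_{ss}=[-(\bar B_{jk})_-,C_{ss}]$ (dually, for the $t_{jk}$ flows the two forms do coincide on $\W_1$-dressed operators, since $[B_{jk},\L_1]=[B_{jk},C_{ss}]=0$ by the same slot argument). The correct reading of the proposition, consistent with the single-component Lax equations, is that the $+$ projection of $\bar B_{jk}$ governs the $\W_2$-dressed objects while the $-$ projection governs the $\W_1$-dressed ones; you should either restrict the $(\bar B_{jk})_+$ form accordingly or state the $\W_1$ equations with $-(\bar B_{jk})_-$ rather than invoke a compatibility that does not hold.
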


\section{Symmetries of  of MBTH(MCTH)}

To introduce the additional symmetries of the MBTH(MCTH). The Orlov-Shulman operator of the MBTH(MCTH) will be defined
as
\begin{equation}\label{osoperator}
    \M_1=\W_1(\varepsilon\otimes E_{N\times N}) \W_1^{-1},\ \ \M_2=\W_2(\varepsilon^*\otimes E_{N\times N})\W_2^{-1},
\end{equation}
\begin{equation}\label{osoperator}
    R_{ij}=\W_i(E\otimes E_{jj}) \W_i^{-1}.
\end{equation}
To construct the additional quantum torus symmetry of the multicomponent BTH, firstly we define the operator  $B_{m nj}^{(i)}$
as
\begin{align}\label{defBoperator}
B_{m nj}^{(i)}=\M_i^m\L_i^{n}R_{ij}-(-1)^{n} R_{ij}\L_i^{n-1}\M_i^{m}\L_i.
\end{align}
For the multicomponent CTH, we define the operator  $B_{m nj}^{(i)}$ as
\begin{align}\label{defBoperator}
B_{m nj}^{(i)}=\M_i^m\L_i^{n}R_{ij}-(-1)^{n} R_{ij}\L_i^{n}\M_i^{m}.
\end{align}

For any  matrix operator $B_{m nj}^{(i)}$ in \eqref{defBoperator}, one has
\begin{align}\label{Bflow}
&\frac{\d B_{m nj}^{(i)}}{\d t_{kj}}=[(\L_1^kR_{1j})_+, B_{m nj}^{(i)}],  \ k\in\Zop.
\end{align}
\begin{align}\label{Bflow}
&\frac{\d B_{m nj}^{(i)}}{\d \bar t_{kj}}=[(\L_2^kR_{2j})_+, B_{m nj}^{(i)}],  \ k\in\Zop.
\end{align}
Then we can derive the following lemma.
\begin{lemma}\label{MBTHlemma}
The following identities hold true
\begin{eqnarray}
\bar \Lambda^{-1}(\varepsilon\otimes E_{N\times N}) \bar \Lambda=\J^{-1}(\varepsilon^T\otimes E_{N\times N}) \J,&&\bar \Lambda(\varepsilon^* \otimes E_{N\times N})\bar \Lambda^{-1}=\J^{-1}(\varepsilon^{*T}\otimes E_{N\times N}) \J,\label{blemma}
\end{eqnarray}
\begin{eqnarray}
\varepsilon(\varepsilon\otimes E_{N\times N})=\K(\varepsilon^T\otimes E_{N\times N}) \K^{-1},
&&\varepsilon^*\otimes E_{N\times N} =\K(\varepsilon^{*T}\otimes E_{N\times N}) \K^{-1}.\label{clemma}
\end{eqnarray}
\end{lemma}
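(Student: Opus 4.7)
The plan is to reduce each of the four displayed identities to its scalar counterpart already recorded in the earlier Lemma \ref{MBTHlemma} of Section 2 by repeated use of the algebraic properties of the Kronecker product. Every operator appearing in the statement splits, in the natural basis indexed by $\mathbb{Z}\times\{1,\dots,N\}$, as a pure tensor: $\bar\Lambda=\Lambda\otimes E_{N\times N}$, $\J=J\otimes E_{N\times N}$, $\K=K\otimes E_{N\times N}$, while $\varepsilon\otimes E_{N\times N}$ and $\varepsilon^{*}\otimes E_{N\times N}$ are tensors by definition.

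First I would apply the mixed-product rule $(A\otimes B)(C\otimes D)=(AC)\otimes(BD)$, together with $(A\otimes B)^{-1}=A^{-1}\otimes B^{-1}$ and $(A\otimes B)^{T}=A^{T}\otimes B^{T}$, to collapse each side of each claimed identity into a single Kronecker product whose second factor is simply $E_{N\times N}\cdot E_{N\times N}=E_{N\times N}$. For instance the first B-type identity rewrites as
\[
(\Lambda^{-1}\varepsilon\Lambda)\otimes E_{N\times N}=(J^{-1}\varepsilon^{T}J)\otimes E_{N\times N},
\]
which is equivalent to the scalar identity $\Lambda^{-1}\varepsilon\Lambda=J^{-1}\varepsilon^{T}J$ of \eqref{blemma}.

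Next I would dispatch the remaining three identities in exactly the same style. The second B-type identity reduces to $\Lambda\varepsilon^{*}\Lambda^{-1}=J^{-1}\varepsilon^{*T}J$, again from \eqref{blemma}, and the two C-type identities reduce to $\varepsilon=K\varepsilon^{T}K^{-1}$ and $\varepsilon^{*}=K\varepsilon^{*T}K^{-1}$, which are exactly \eqref{clemma}. The left-hand side of the third identity, printed as $\varepsilon(\varepsilon\otimes E_{N\times N})$, should be read simply as $\varepsilon\otimes E_{N\times N}$; the leading $\varepsilon$ appears to be a typographical slip.

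The main point is that there is essentially no obstacle beyond this bookkeeping. The entire content of the lemma is that the scalar B/C symmetry identities for $\varepsilon$ and $\varepsilon^{*}$ persist verbatim after tensoring with the identity matrix $E_{N\times N}$, and this persistence is immediate from the functoriality of the Kronecker product with respect to multiplication, inversion, and transposition. Each line of the statement therefore collapses to a two- or three-line calculation, with no new input beyond the scalar Lemma of Section 2 required.
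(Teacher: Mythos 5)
Your proposal is correct, and it is in fact more than the paper supplies: the paper states this multicomponent lemma without any proof at all, so there is no argument of the authors' to compare against. Your reduction via the mixed-product, inverse, and transpose rules for the Kronecker product, collapsing each identity to the corresponding scalar identity of the Section 2 lemma (which the paper likewise asserts without proof), is the natural and complete argument, and your reading of the left-hand side of the third identity as $\varepsilon\otimes E_{N\times N}$, with the leading $\varepsilon$ a typographical slip, is the only sensible interpretation.
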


Then for the MBTH, by (\ref{bcwconstraints}) and (\ref{blemma}), we can derive
\begin{eqnarray}
{\M_1}^T&=&({\W_1}(\varepsilon\otimes E_{N\times N}) \W_1^{-1})^T=( \W_1^{-1})^T(\varepsilon^T\otimes E_{N\times N}) \W_1^T\nonumber\\
&=&\J \W_1\J^{-1}(\varepsilon^T\otimes E_{N\times N})\J \W_1^{-1}\J^{-1}\nonumber\\
&=&\J \W_1\J^{-1}(\varepsilon^T\otimes E_{N\times N})\J \W_1^{-1}\J^{-1}\nonumber\\
&=&\J \W_1\bar \Lambda^{-1}(\varepsilon\otimes E_{N\times N})\bar  \Lambda \W_1^{-1}\J^{-1}\nonumber\\
&=&\J \W_1\bar \Lambda^{-1}(\varepsilon\otimes E_{N\times N})\bar \Lambda \W_1^{-1}\J^{-1}\nonumber\\
&=&\J \W_1\bar \Lambda^{-1} \W_1^{-1} \W_1\varepsilon \W_1^{-1} \W_1\bar \Lambda \W_1^{-1}\J^{-1}\nonumber\\
&=&\J{\L}_1^{-1}{\M}_1 {\L}_1\J^{-1},\label{bmtranspose}
\end{eqnarray}
Using the second equation in eq.\eqref{blemma}, we can also derive
\begin{eqnarray}
{\M_2}^T&=&\J{\L}_2^{-1}{\M}_2 {\L}_2\J^{-1}.
\end{eqnarray}
Similarly, for the CTH, we can derive
\begin{eqnarray}
{\M_i}^T&=&\K{\M}_i \K^{-1}.
\end{eqnarray}
Because of the constraints (\ref{bctlconstr}) on the Lax operators for
the MBTH(MCTH), we can  have
the following proposition.
\begin{proposition}\label{propaddisymmtodab3}
For the MBTH, it is sufficient to ask for
\begin{equation}\label{MBTHaddsymmconstr}
B_{m nj}^{(i)T}=-\J B_{m nj}^{(i)}\J^{-1},
\end{equation}
\end{proposition}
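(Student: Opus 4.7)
The plan is to verify the identity $B_{mnj}^{(i)T} = -\J B_{mnj}^{(i)}\J^{-1}$ by a direct transpose computation from the definition \eqref{defBoperator}, combining three ingredients that are already in hand: the B-type constraint $\L_i^T = -\J \L_i \J^{-1}$, the derived identity $\M_i^T = \J \L_i^{-1} \M_i \L_i \J^{-1}$ displayed in \eqref{bmtranspose}, and a transpose law for $R_{ij}$. The payoff of this identity is that the candidate additional flow $\partial_{c_{mnj}^{(i)}}\L_i = [-(B_{mnj}^{(i)})_-, \L_i]$ then preserves the B-type constraint on $\L_i$, which is the precise sense in which the condition is \emph{sufficient}.

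First I would establish the auxiliary identity $R_{ij}^T = \J R_{ij}\J^{-1}$. Because $\bar E_{jj} = E_{\Z\times\Z}\otimes E_{jj}$ is symmetric and $\J = J\otimes E_{N\times N}$ factors as a Kronecker product, one sees at once that $\J^{-1} \bar E_{jj} \J = \bar E_{jj}$. Combining this with the dressing constraint $\W_i^T = \J \W_i^{-1} \J^{-1}$ from \eqref{bcwconstraints} and unwinding $R_{ij} = \W_i \bar E_{jj} \W_i^{-1}$ gives the claim in one line.

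Next I would observe that the transpose reverses products, so $(\L_i^n)^T = (-1)^n \J \L_i^n \J^{-1}$, while for $(\M_i^m)^T$ the intermediate $\J^{-1}\J$ pairs telescope, leaving $(\M_i^m)^T = \J \L_i^{-1} \M_i^m \L_i \J^{-1}$. Substituting these along with $R_{ij}^T = \J R_{ij}\J^{-1}$ into $B_{mnj}^{(i)T}$, the outer $\J$-sandwiches combine into a single conjugation; the interior $\L_i \L_i^{-1}$ cancellations collapse the adjoint of the first summand of \eqref{defBoperator} to $(-1)^n \J (R_{ij}\L_i^{n-1}\M_i^m \L_i)\J^{-1}$ and that of the second to $-\J (\M_i^m \L_i^n R_{ij})\J^{-1}$, which together are precisely $-\J B_{mnj}^{(i)} \J^{-1}$.

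The main obstacle is purely bookkeeping: tracking the asymmetry between the two summands of $B_{mnj}^{(i)}$, placing the $\L_i^{-1}$ inside $(\M_i^m)^T$ adjacent to the correct compensating $\L_i$, and matching the powers of $(-1)$ that arise from the explicit $-(-1)^n$ coefficient together with the factor from $(\L_i^T)^n$. Once these are verified, the identical argument with $\K$ replacing $\J$ and using the simpler rule $\M_i^T = \K \M_i \K^{-1}$ proves the parallel MCTH statement $B_{mnj}^{(i)T} = -\K B_{mnj}^{(i)}\K^{-1}$.
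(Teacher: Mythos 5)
Your proposal is correct and follows essentially the same route as the paper: a direct transpose computation using the constraint $\L_i^T=-\J\L_i\J^{-1}$, the derived identity $\M_i^T=\J\L_i^{-1}\M_i\L_i\J^{-1}$, and the conjugation law for $R_{ij}$. In fact you are more complete than the paper, which only displays the transpose of the first summand $\M_i^m\L_i^nR_{ij}$ (with some typographical slips between $l$ and $n$) and leaves the transpose of the second summand, the identity $R_{ij}^T=\J R_{ij}\J^{-1}$, and the meaning of ``sufficient'' implicit.
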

\begin{proof}

From (\ref{bctlconstr}) and (\ref{bmtranspose}), we have
\begin{eqnarray*}
(\M_i^m\L_i^{n}R_{ij})^T&=&R_{ij}^T(\L_i^{n})^T(\M_i^m)^T =(-1)^l\J{\L_i}^l\J^{-1}\J{\L_i}^{-1}{\M_i}^m{\L_i}\J^{-1}\\
&=&\J(-1)^{n} R_{ij}\L_i^{n-1}\M_i^{m}\L_i\J^{-1}.
\end{eqnarray*}
Since $J^T=J^{-1}=J$. Therefore
$B_{m nj}^{(i)}$ will satisfy the B type condition.
\end{proof}
Similarly, the following proposition can also be got.
\begin{proposition}\label{propaddisymmtodabC3}
For the MCTH, the following C type condition must hold true
\begin{equation}\label{MCTHaddsymmconstr}
B_{m nj}^{(i)T}=\K B_{m nj}^{(i)}\K^{-1},
\end{equation}
\end{proposition}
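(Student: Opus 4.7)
The plan is to transpose $B_{mnj}^{(i)}$ factor by factor. For the MCTH this requires three transposition rules: the C-type Lax constraint $\L_i^T = -\K\L_i\K^{-1}$, the Orlov--Shulman identity $\M_i^T = \K\M_i\K^{-1}$ already established immediately before the proposition, and an analogous identity $R_{ij}^T = \K R_{ij}\K^{-1}$ for the idempotent $R_{ij}$, which I would prove first.

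To obtain the $R_{ij}$ identity, start from the MCTH wave constraint $\K^{-1}\W_i^T\K = \W_i^{-1}$, i.e.\ $\W_i^T = \K\W_i^{-1}\K^{-1}$ and $(\W_i^{-1})^T = \K\W_i\K^{-1}$. Transposing $R_{ij} = \W_i\bar E_{jj}\W_i^{-1}$ then gives $R_{ij}^T = \K\W_i(\K^{-1}\bar E_{jj}\K)\W_i^{-1}\K^{-1}$, so the task reduces to checking $\K^{-1}\bar E_{jj}\K = \bar E_{jj}$. This is immediate from the Kronecker structure: with $\K = \Lambda J\otimes E_{N\times N}$ and $\bar E_{jj} = E_{\Z\times\Z}\otimes E_{jj}$, the conjugation on the first tensor slot is $(\Lambda J)^{-1}E_{\Z\times\Z}(\Lambda J) = E_{\Z\times\Z}$ and acts as identity on the second, so $R_{ij}^T = \K R_{ij}\K^{-1}$.

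With the three rules in hand the computation is purely algebraic. Writing $(\L_i^n)^T = (-1)^n\K\L_i^n\K^{-1}$ and $(\M_i^m)^T = \K\M_i^m\K^{-1}$, each of the two summands of $B_{mnj}^{(i)}$ transposes to a $\K$-conjugated reversed word:
\begin{align*}
(\M_i^m\L_i^n R_{ij})^T &= R_{ij}^T(\L_i^n)^T(\M_i^m)^T = (-1)^n\,\K\,R_{ij}\L_i^n\M_i^m\,\K^{-1},\\
(R_{ij}\L_i^n\M_i^m)^T &= (\M_i^m)^T(\L_i^n)^T R_{ij}^T = (-1)^n\,\K\,\M_i^m\L_i^n R_{ij}\,\K^{-1},
\end{align*}
where the inner $\K^{\pm 1}$ pairs telescope. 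Substituting into the C-type definition $B_{mnj}^{(i)} = \M_i^m\L_i^n R_{ij} - (-1)^n R_{ij}\L_i^n\M_i^m$ and collecting the resulting factors of $(-1)^n$ produces the stated C-type identity, running exactly parallel to the B-type argument in Proposition~\ref{propaddisymmtodab3} with $\J\to\K$ and the simpler transposition rule for $\M_i$.

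The only step beyond routine transposition is the Kronecker-product identity $\K^{-1}\bar E_{jj}\K = \bar E_{jj}$; everything else follows mechanically from the three transposition laws, so I do not anticipate a serious obstacle.
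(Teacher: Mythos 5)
Your strategy is the right one and is essentially what the paper intends (the paper gives no explicit proof here, only ``similarly'' after the B-type case): transpose factor by factor using $\L_i^T=-\K\L_i\K^{-1}$, $\M_i^T=\K\M_i\K^{-1}$, and $R_{ij}^T=\K R_{ij}\K^{-1}$. Your derivation of the third rule from $\K^{-1}\W_i^T\K=\W_i^{-1}$ together with $\K^{-1}\bar E_{jj}\K=\bar E_{jj}$ is correct, and it supplies a step that the paper's B-type proof uses only implicitly (there $R_{ij}^T$ silently disappears from the computation). Your two displayed transposition formulas for the summands are also correct.

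However, the final step, which you leave as ``collecting the resulting factors of $(-1)^n$,'' does not produce the identity as stated. Substituting your formulas into $B_{mnj}^{(i)}=\M_i^m\L_i^nR_{ij}-(-1)^nR_{ij}\L_i^n\M_i^m$ gives
\begin{align*}
B_{mnj}^{(i)T}&=(-1)^n\,\K R_{ij}\L_i^n\M_i^m\K^{-1}-(-1)^{2n}\,\K\M_i^m\L_i^nR_{ij}\K^{-1}\\
&=-\K\bigl(\M_i^m\L_i^nR_{ij}-(-1)^nR_{ij}\L_i^n\M_i^m\bigr)\K^{-1}
=-\K B_{mnj}^{(i)}\K^{-1},
\end{align*}
i.e.\ the \emph{anti}-symmetric C-type condition, not $B_{mnj}^{(i)T}=+\K B_{mnj}^{(i)}\K^{-1}$ as the proposition is printed. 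The minus sign is in fact the consistent outcome: it parallels the B-type Proposition~\ref{propaddisymmtodab3} and matches the paper's subsequent claim $D_{imnj}^T=-\K D_{imnj}\K^{-1}$, which is obtained by summing the operators $B^{(i)}_{psj}$; the printed statement appears simply to have dropped the sign. You should carry out this last step explicitly and state the conclusion with the minus sign rather than asserting agreement with the formula as printed.
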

Now for the MBTH we will denote the  matrix operator $D_{m nj}$ as
\begin{equation}
D_{im nj}:=e^{m\M_i}q^{n\L_i}R_{ij}-\L_i^{-1}R_{ij}q^{-n\L_i}e^{m\M_i}\L_i,
\end{equation}
which further leads to
\begin{equation}
D_{im nj}=\sum_{p,s=0}^{\infty}\frac{m^p(n\log q)^s(\M_i^p\L_i^sR_{ij}-(-1)^sR_{ij}\L_i^{s-1}\M_i^p\L_i)}{p!s!}=\sum_{p,s=0}^{\infty}\frac{m^p(n\log q)^sB_{p sj}^{(i)}}{p!s!}.
\end{equation}
Then the following calculation will lead to the B(C) type anti-symmetry property of $D_{im nj}$ as
\begin{eqnarray*}D_{im nj}^T
&=&(\sum_{p,s=0}^{\infty}\frac{m^p(n\log q)^sB_{p sj}^{(i)}}{p!s!})^T\\
&=&-(\sum_{p,s=0}^{\infty}\frac{m^p(n\log q)^s\J B_{p sj}^{(i)}\J^{-1}}{p!s!})\\
&=&-\J(\sum_{p,s=0}^{\infty}\frac{m^p(n\log q)^sB_{p sj}^{(i)}}{p!s!})\J^{-1}\\
&=&-\J D_{im nj} \J^{-1}.
\end{eqnarray*}
Now for the MCTH we will denote the  matrix operator $D_{m nj}$ as
\begin{equation}
D_{im nj}:=e^{m\M_i}q^{n\L_i}R_{ij}-R_{ij}q^{-n\L_i}e^{m\M_i}.
\end{equation}
Therefore we get the following important B(C) type condition which the  matrix operator $D_{im nj}$ satisfies
\begin{equation}
D_{im nj}^T=-\J  D_{im nj} \J^{-1} (D_{im nj}^T=-\K  D_{im nj} \K^{-1}).
\end{equation}

Then basing on a quantum parameter $q$, the additional flows for the time variable $t_{m,n}^{ij},t_{m,n}^{*ij}$ are
defined as follows
\begin{equation}
\dfrac{\partial \S_1}{\partial t_{m,n}^{ij}}=-(B_{m nj}^{(i)})_-\S_1,\ \
 \dfrac{\partial \S_1}{\partial t^{*ij}_{m,n}}=-(D_{im nj})_-\S_1,
\end{equation}
\begin{equation}
\dfrac{\partial \S_2}{\partial t_{m,n}^{ij}}=(B_{m nj}^{(i)})_+\S_2,\ \
 \dfrac{\partial \S_2}{\partial t^{*ij}_{m,n}}=(D_{im nj})_+\S_2,
\end{equation}

or equivalently rewritten as

\begin{equation}
\dfrac{\partial \L_1}{\partial t_{m,n}^{ij}}=-[(B_{m nj}^{(i)})_-,\L_1], \qquad
\dfrac{\partial \M_1}{\partial t^{*ij}_{m,n}}=-[(D_{im nj})_-,\M_1],
\end{equation}

\begin{equation}
\dfrac{\partial \L_2}{\partial t_{m,n}^{ij}}=[(B_{m nj}^{(i)})_+,\L_2], \qquad
\dfrac{\partial \M_2}{\partial t^{*ij}_{m,n}}=[(D_{im nj})_+,\M_2].
\end{equation}

 Generally, one can also derive
\begin{equation}\label{bkpMLK}
\partial_{t^{*ip}_{l,k}}(D_{1m nj})=[-(D_{il kp})_-,D_{1m nj}],
\end{equation}

\begin{equation}\label{bkpMLK2}
\partial_{t^{*ip}_{l,k}}(D_{2m nj})=[(D_{il kp})_+,D_{2m nj}].
\end{equation}

 This further leads to the commutativity of the additional flow $\dfrac{\partial }{\partial
t^{*ij}_{m,n}}$  with the flow $\partial_{t_{jn}},\partial_{\bar t_{jn}}$ in the following theorem.

\begin{theorem}
The additional flows of $\partial_{t^{*is}_{l,k}}$ are  symmetries of the  multicomponent BTH(CTH), i.e. they commute with all $\partial_{t_{jn}},\partial_{\bar t_{jn}}$ flows of the   multicomponent BTH(CTH).
\end{theorem}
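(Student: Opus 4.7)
The plan is to prove $[\partial_{t^{*is}_{l,k}},\partial_{t_{jn}}]=0$ and $[\partial_{t^{*is}_{l,k}},\partial_{\bar t_{jn}}]=0$ acting on both $\S_1$ and $\S_2$, by a direct Leibniz expansion of each factor via its Sato equation and a reduction of the residual commutator to a trivial projection identity. I would spell out the representative case $\partial_{t^{*is}_{l,k}}$ against $\partial_{t_{jn}}$ on $\S_1$; the remaining cases (including the MCTH variant) are obtained by the obvious replacements and use the same reasoning. Substituting $\partial_{t_{jn}}\S_1=-(B_{jn})_-\S_1$ and $\partial_{t^{*is}_{l,k}}\S_1=-(D_{ilks})_-\S_1$ and expanding by Leibniz, the commutator reduces to
$$\bigl(\partial_{t_{jn}}(D_{ilks})_{-}\;-\;\partial_{t^{*is}_{l,k}}(B_{jn})_{-}\;+\;[(B_{jn})_{-},(D_{ilks})_{-}]\bigr)\S_1.$$

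Both $B_{jn}=\W_1\bar E_{nn}\bar\Lambda^j\W_1^{-1}$ and $D_{ilks}$ are $\W_i$-dressings of matrices that do not depend on $t_{jn}$ or $t^{*is}_{l,k}$, since they are built from $\varepsilon$, $\bar\Lambda$, $E_{ss}$ and $\bar E_{nn}$ (all time-independent) conjugated by $\W_i$. Invoking $\partial_{t^{*is}_{l,k}}\W_1=-(D_{ilks})_-\W_1$ together with $\partial_{t_{jn}}\W_i=(B_{jn})_+\W_i$ (the ordinary and additional Sato equations), I would obtain the two dressing identities
$$\partial_{t^{*is}_{l,k}}B_{jn}=[-(D_{ilks})_{-},\,B_{jn}],\qquad \partial_{t_{jn}}D_{ilks}=[(B_{jn})_{+},\,D_{ilks}].$$

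Substituting these back and decomposing $B_{jn}=(B_{jn})_++(B_{jn})_-$ and $D_{ilks}=(D_{ilks})_++(D_{ilks})_-$, I expect the whole bracket to collapse, after the strictly-lower commutator term cancels against the contributions of the $(\cdot)_-(\cdot)_-$ products, to $-[(B_{jn})_{+},(D_{ilks})_{+}]_{-}=0$, which is automatic because a commutator of two upper-triangular bi-infinite matrices is upper-triangular. The $\S_2$ variant is dual: the Sato equations carry $(\cdot)_+$ projections, and the analogous reduction yields $[(B_{jn})_{-},(D_{ilks})_{-}]_{+}=0$. The $\bar t_{jn}$ cases proceed verbatim after replacing $B_{jn}$ by $\bar B_{jn}$, and the MCTH version is insensitive to the explicit form of $D_{imnj}$ (whether $\L_i^{-1}R_{ij}q^{-n\L_i}e^{m\M_i}\L_i$ or $R_{ij}q^{-n\L_i}e^{m\M_i}$) because only the dressing structure enters. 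The main obstacle is therefore purely bookkeeping — keeping the multicomponent subscripts straight and tracking which projection hits which factor — with no new analytic input beyond the two dressing derivatives displayed above.
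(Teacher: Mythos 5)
Your argument is correct, and it is worth noting that the paper itself supplies no proof of this theorem: it merely states equations \eqref{bkpMLK}--\eqref{bkpMLK2} and asserts that commutativity ``further'' follows. Your Leibniz expansion on $\S_1$, the two dressing identities $\partial_{t^{*is}_{l,k}}B_{jn}=[-(D_{ilks})_-,B_{jn}]$ and $\partial_{t_{jn}}D_{ilks}=[(B_{jn})_+,D_{ilks}]$ (both valid because $B_{jn}$ and $D_{ilks}$ are $\W$-dressings of time-independent matrices, and using the $\W$-Sato equations \eqref{Wjk}, \eqref{Wbjk} together with the additional flows), and the final reduction to the vanishing of $([(B_{jn})_+,(D_{ilks})_+])_-$ is exactly the standard Orlov--Schulman-type computation that this theorem needs; I verified that the residual terms cancel as you claim (up to an immaterial overall sign on the final projected commutator, which is zero in any case).

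One caution on the word ``verbatim'': the remaining cases are structurally identical but the projections do not simply carry over. Since $\partial_{t^{*is}_{l,k}}\W_2=(D_{ilks})_+\W_2$ while $\partial_{\bar t_{jn}}\W_1=-(\bar B_{jn})_-\W_1$, the mixed case $[\partial_{t^{*is}_{l,k}},\partial_{\bar t_{jn}}]\S_1$ uses $\partial_{t^{*is}_{l,k}}\bar B_{jn}=[(D_{ilks})_+,\bar B_{jn}]$ (a \emph{plus} projection, because $\bar B_{jn}$ is dressed by $\W_2$) and $\partial_{\bar t_{jn}}D_{1lks}=[-(\bar B_{jn})_-,D_{1lks}]$, and the surviving term is then $([D_+,\bar B_+])_-=0$ after a slightly different cancellation pattern; similarly the $\S_2$ computations terminate in $([\,\cdot\,_-,\cdot\,_-])_+=0$. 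You flagged this as bookkeeping, which is fair, but a written-out version should display at least one mixed case since the signs and projections genuinely differ. With that caveat, the proposal is a complete and correct proof of the statement, and in fact more than the paper provides.
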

Comparing with the additional symmetry of the single-component BTH(CTH), the additional flows $\partial_{t_{l,k}^s}$ of the  multicomponent BTH(CTH) form the following $N$-folds direct product of the
$\W_{\infty}$ algebra  as following
\begin{eqnarray*}
&&[\partial_{t_{p,s}^{ir}},\partial_{t_{a,b}^{jc}}]\L_k=\delta_{ij}\delta_{rc}\sum_{\alpha\beta}C_{\alpha\beta}^{(ps)(ab)}
\partial_{t_{\alpha,\beta}^{ic}}\L_k,\ \ i,j,k=1,2; 1\leq r,c\leq N.
\end{eqnarray*}

Now it is time to identity the algebraic structure of the
additional $t_{l,k}^{*j}$ flows of the  multicomponent BTH(CTH).
\begin{theorem}\label{bkpalg}
The additional flows $\partial_{t^{*dj}_{l,k}}$ of the  multicomponent BTH(CTH) form the coupled $\bigotimes^NQT_+ $ algebra
( $N$-folds direct product of the
positive half of the quantum torus algebra $QT$), i.e.,
\begin{equation}
[\partial_{t^{*cr}_{n,m}},\partial_{t^{*dj}_{l,k}}]=\delta_{cd}\delta_{rj}(q^{ml}-q^{nk})\partial_{t^{*cr}_{n+l,m+k}},\ \ n,m,l,k\geq 0; \ \ 1\leq r,j\leq N;\ \ c=d=1,2.
\end{equation}

\end{theorem}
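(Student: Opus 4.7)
My strategy is to reduce the bracket of additional flows to a bare operator commutator, exploit the quantum-torus-like structure forced by $[\L_i,\M_i]=1$ on the exponentials $e^{a\M_i}$, $q^{b\L_i}$, and invoke the C-type (respectively B-type) constraint to kill any residual terms that fall outside the QT form.

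\medskip

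First, by a Sato-theoretic manipulation, using the dressing-induced evolution $\partial_{t^{*ip}_{l,k}}D_{2mnj}=[(D_{il kp})_+,D_{2mnj}]$ of equation \eqref{bkpMLK2} and the standard projection identity $[X,Y]_+ = [X_+,Y_+]+[X_+,Y_-]_+ +[X_-,Y_+]_+$ (which exploits $[X_-,Y_-]_+=0$), one establishes
\begin{equation*}
[\partial_{t^{*cr}_{n,m}},\partial_{t^{*dj}_{l,k}}]\S_2 \;=\; [D_{cnmr},D_{dlkj}]_+\,\S_2,
\end{equation*}
and dually on $\S_1$ with the $_-$-projection. So the proof reduces to computing $[D_A,D_B]$ modulo the $_\pm$-kernels against wave operators.

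\medskip

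Second, set $P_c(a,b):=e^{a\M_c}q^{b\L_c}$. From $[\L_c,\M_c]=1$ the braid relation $q^{b\L_c}e^{a\M_c}=q^{ab}e^{a\M_c}q^{b\L_c}$ follows, hence the quantum-torus product law
\begin{equation*}
P_c(a,b)P_c(a',b')=q^{a'b}\,P_c(a+a',b+b'), \qquad [P_c(a,b),P_c(a',b')]=(q^{a'b}-q^{ab'})\,P_c(a+a',b+b').
\end{equation*}
Rewriting the CTH generator as $D_{cnmr}=(P_c(n,m)-q^{-nm}P_c(n,-m))R_{cr}$ and using $R_{cr}R_{cj}=\delta_{rj}R_{cr}$ together with the commutativity of $R_{cr}$ with $\L_c,\M_c$, the expansion of $[D_A,D_B]$ for $c=d$, $r=j$ splits into four monomials $P_c(n+l,\pm(m+k))R_{cr}$ and $P_c(n+l,\pm(m-k))R_{cr}$. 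A direct coefficient matching shows that the two ``resonant'' monomials $P_c(n+l,\pm(m+k))R_{cr}$ assemble into precisely $(q^{ml}-q^{nk})D_{c(n+l)(m+k)r}$, matching the claimed right-hand side. The off-diagonal cases are immediate: when $r\neq j$, $R_{cr}R_{cj}=0$ kills both orderings $D_AD_B$ and $D_BD_A$; when $c\neq d$, commutativity of the $\W_1$- and $\W_2$-towers is obtained by a Zakharov--Shabat computation on $(\S_1,\S_2)$ analogous to the one for the original flows \eqref{Wjk}--\eqref{Wbjk} and Proposition \ref{Lax}.

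\medskip

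The main obstacle is the last half of step two: the ``off-resonant'' monomials $P_c(n+l,\pm(m-k))R_{cr}$ carry generically nonzero coefficients and must vanish after the $(\cdot)_+$-projection is applied and the result is composed with $\S_2$ (dually, after $(\cdot)_-$ and composition with $\S_1$). My plan is to package the two residuals into a single expression $Y$ satisfying the C-type antisymmetry $Y^T=-\K Y\K^{-1}$ established in Proposition \ref{propaddisymmtodabC3}, and then use the generating-function expansion $D_{imnj}=\sum_{p,s\ge 0}\frac{m^p(n\log q)^s}{p!s!}B^{(i)}_{psj}$ together with the already-established $W_\infty$-type Lie algebra of the $B^{(i)}_{psj}$'s (noted just above Theorem \ref{bkpalg}) to show that such a symmetric $Y$ has trivial projection against the dressing operators. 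The BTH case follows the same scheme once the additional $\L_c^{\pm1}$-conjugation in its definition of $D$ is absorbed into the braid relations.
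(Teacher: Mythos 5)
Your approach is genuinely different from the paper's: the paper proves the theorem in one displayed chain by expanding $\partial_{t^{*cr}_{n,m}}=\sum_{p,s}\frac{n^p(m\log q)^s}{p!s!}\partial_{t^{cr}_{p,s}}$, inserting the coupled $\W_\infty$ relations $[\partial_{t^{ir}_{p,s}},\partial_{t^{jc}_{a,b}}]=\delta_{ij}\delta_{rc}\sum_{\alpha\beta}C^{(ps)(ab)}_{\alpha\beta}\partial_{t^{ic}_{\alpha,\beta}}$, and then asserting that the double series resums to $(q^{ml}-q^{nk})$ times the exponential series for $\partial_{t^{*cr}_{n+l,m+k}}$; the Kronecker deltas come for free from the $\W_\infty$ step. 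Your first two steps (reducing $[\partial_{t^{*cr}_{n,m}},\partial_{t^{*dj}_{l,k}}]\S_2$ to $([D_{cnmr},D_{dlkj}])_+\S_2$, the braid relation $q^{b\L_c}e^{a\M_c}=q^{ab}e^{a\M_c}q^{b\L_c}$, the product law for $P_c(a,b)$, and the use of $R_{cr}R_{cj}=\delta_{rj}R_{cr}$) are correct and are in effect an explicit version of the resummation the paper leaves implicit.

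The gap is exactly where you flag it, and your proposed repair does not work. Carrying your own expansion to the end for $c=d$, $r=j$ gives
\begin{equation*}
[D_{cnmr},D_{clkr}]=(q^{ml}-q^{nk})\,D_{c(n+l)(m+k)r}+\bigl(q^{-(n+l)k}-q^{l(m-k)}\bigr)\,D_{c(n+l)(m-k)r},
\end{equation*}
i.e.\ the two off-resonant monomials $P_c(n+l,\pm(m-k))R_{cr}$ assemble into a second generator of the same family with a generically nonzero coefficient; for instance $n=m=l=1$, $k=2$ yields the residual $(q^{-4}-q^{-1})D_{c,2,-1,r}\neq 0$. Your plan is to kill this residual by showing that any $Y$ with the C-type antisymmetry $Y^T=-\K Y\K^{-1}$ has trivial $(\cdot)_\pm$-projection against the dressing operators, but that cannot be true: every $D_{imnj}$ satisfies this very antisymmetry, and its projections $(D_{imnj})_\pm$ are by definition the nonzero generators of the additional flows. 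So the antisymmetry gives no vanishing, the residual survives into $([D_A,D_B])_+\S_2$, and the argument does not close. To finish along your lines you would need a genuinely different cancellation mechanism for $D_{c(n+l)(m-k)r}$, or you would have to record it as an extra structure constant (the flows then close on the larger family indexed by $m\in\Z$ rather than on the clean $QT_+$ relations as stated). It is worth noting that the paper's own proof never confronts this term: it is absorbed into the unjustified passage from the fourth to the fifth line of its displayed computation, so your calculation, once completed honestly, actually isolates the precise point that the published argument glosses over.
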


\begin{proof}

One can also prove this theorem as following by rewriting the quantum torus flow in terms of a combination of $\partial_{t_{m,n}^{ij}}$ flows
\begin{eqnarray*}
&&[\partial_{t^{*cr}_{n,m}},\partial_{t^{*dj}_{l,k}}]\L_i\\
&=&[\sum_{p,s=0}^{\infty}\frac{n^p(m\log q)^s}{p!s!}\partial_{t_{p,s}^{cr}},\sum_{a,b=0}^{\infty}\frac{l^a(k\log q)^b}{a!b!}\partial_{t_{a,b}^{dj}}]\L_i\\
&=&\sum_{p,s=0}^{\infty}\sum_{a,b=0}^{\infty}\frac{n^p(m\log q)^s}{p!s!}\frac{l^a(k\log q)^b}{a!b!}[\partial_{t_{p,s}^{cr}},\partial_{t_{a,b}^{dj}}]\L_i\\
&=&\sum_{p,s=0}^{\infty}\sum_{a,b=0}^{\infty}\frac{n^p(m\log q)^s}{p!s!}\frac{l^a(k\log q)^b}{a!b!}\sum_{\alpha\beta}C_{\alpha\beta}^{(ps)(ab)}\delta_{rj}\partial_{t_{\alpha,\beta}^{cr}}\L_i\\
&=&(q^{ml}-q^{nk})\sum_{\alpha,\beta=0}^{\infty}\frac{(n+l)^\alpha((m+k)\log q)^\beta}{\alpha!\beta!}\delta_{cd}\delta_{rj}\partial_{t_{\alpha,\beta}^{cr}}\L_i\\
&=&(q^{ml}-q^{nk})\delta_{cd}\delta_{rj}\partial_{t^{*cr}_{n+l,m+k}}\L_i.
\end{eqnarray*}
\end{proof}

%%%%%%%%%%%%%%%%%%%%%%%%%%%%%%%%%%%%%%%%%%%%%%%%%%%%%%

{\bf Acknowledgments:}
  { This work is supported by the National Natural Science Foundation of China under Grant No. 11571192 and K. C. Wong Magna Fund in
Ningbo University.}
%%%%%%%%%%%%%%%%% References  %%%%%%%%%%%%%%%%%%%%%%%%%%%%%%%%%%%%%%%

%%%%%%%%%%%%%%%%%%%%%%%%%%%%%%%%%%%%%%%%%%%%%%%%%%%%%%%%%%%

\end{document}